\newcommand{\corr}[1]{\langle {#1} \rangle}
\newcommand{\bt}{{\bf t}}
 \newcommand{\pd}{\partial}
\newcommand{\Mbar}{\overline{\mathcal M}}
 \DeclareMathOperator{\Deg}{Deg}   \DeclareMathOperator{\Tr}{Tr}
\newcommand{\be}{\begin{equation}}
\newcommand{\ee}{\end{equation}}
\newcommand{\bea}{\begin{eqnarray}}
\newcommand{\eea}{\end{eqnarray}}
\newcommand{\ben}{\begin{eqnarray*}}
\newcommand{\een}{\end{eqnarray*}}
\newcommand{\half}{\frac{1}{2}}
\newtheorem{cor}{Corollary}[section]
 \newtheorem{prop}[cor]{Proposition}
 \newtheorem{thm}[cor]{Theorem}
\theoremstyle{remark}
 \newtheorem{rmk}[cor]{Remark}
\definecolor{A}{rgb}{.75,1,.75}
\definecolor{yellow}{rgb}{1,1,0}
\definecolor{orange}{rgb}{1,.7,0}
\definecolor{red}{rgb}{1,0,0}
\definecolor{white}{rgb}{1,1,1}
\definecolor{green}{rgb}{0,1,0}
\begin{document}
\title
{On Phase Transition of Two-Dimensional Topological Gravity}
\author{Jian Zhou}


\begin{abstract}
We show that one can use some renormalized coupling constants to compute the free energy and correlation functions
at all critical points of the two-dimensional topological gravity in a uniform way.
In particular,
one can derive the critical exponents of the free energy and correlation functions at all critical points in a uniform way.
Some concrete results for the case of $(3,2)$-model (pure gravity) and the $(5,2)$-model (Yang-Lee edge singularity coupled with gravity) 
are also presented.
\end{abstract}

\maketitle

\section{Introduction}

There are several different approaches to two dimensional quantum gravity theory in the physics literature.
They are based on different kinds of geometries of surfaces,
and they lead to different fields of mathematics.

The first approach is the differential geometric approach of Liouville field theory  (see e.g. \cite{David, Distler-Kawai}, for a review see \cite{Nakayama})
in which the action involves the Liouville field, the conformal matter fields, and the ghost fields.
A milestone in this approach is the fractional structures discovered in \cite{KPZ} and generalized in \cite{David, Distler-Kawai}.
These are the salient features that have to be reproduced by other approaches. 
The conformal fields are usually taken from the $(p,q)$-minimal model,
where $p$ and $q$ are coprime.
They have central charge
\ben
&& c = 1 - \frac{6(p-q)^2}{pq}, 
\een
which gives the dimension of a fictitious space used to define the Liouville action.  
It can be a rational number or even a negative number.
The  primary fields $\Phi_{r,s}$ are indexed by $(r,s)$  ($1 \leq r < q$, $1 \leq s < p$),
and the conformal dimensions of $\Phi_{r,s}$ are
\ben
h_{r,s} = \frac{(pr-qs)^2-(p-q)^2}{4pq}.
\een 
The Liouville fields are concerned with the Riemannian metrics on Riemann surfaces,
so the Liouville field theory depends on the area $A$ of the surfaces.
The free energy in genus $g$ takes the form
\be 
F_g(A) \sim K \cdot A^{\gamma_g-3},
\ee
where $\gamma_g$ is the {\em string susceptibility} given by the KPZ formula 
\be 
\gamma_0 = \frac{1}{12} (c-1 - \sqrt{(1-c)(25-c)})
\ee
in genus zero \cite{KPZ}, 
and by 
\be 
\gamma_g  = (1-g)(\gamma_0-2)+2 = \frac{1-g}{12} (c-25- \sqrt{(1-c)(25-c)} +2   
\ee
in higher genera \cite{David, Distler-Kawai}.
After coupling with gravity, 
a primary field $\phi_0$ with conformal weight $\Delta_0$ is gravitationally dressed to become an operator $\phi=\phi_0e^{\alpha \varphi}$,  
where 
\be 
\alpha = \frac{1}{\sqrt{12}} (\sqrt{25-c} - \sqrt{1-c+24 \Delta_0}),
\ee
then the dressed primary operator has scaling exponent:
\be 
\Delta = \frac{\sqrt{1-c+24\Delta_0}-\sqrt{1-c}}{\sqrt{25-c} -\sqrt{1-c}}.
\ee
This is the DDK/KPZ formula. 
This formula means if $\phi^1, \dots, \phi^n$ are $n$ dressed operators with scaling exponents $\Delta_1, \dots, \Delta_n$ respectively,
then the $n$-point correlation function has the following behavior:
\be 
\corr{\phi^1 \cdots \phi^n}_g \sim A^{\gamma_g -3 + \sum_{i=1}^n (1- \Delta_i)}.
\ee

So in  the Liouville field approach to two-dimensional gravity,
one has different models, indexed by coprime couples of integers  $(p,q)$,
each model has some fraction structure and  one is concerned with the critical exponents of the free energy and the scaling operators.
To unify them, it is natural to regard them as different phases of a single theory 
and study the phase transitions in this theory,
similar to the Wilson paradigm of the study of phase transitions and critical phenomena in statistical physics in
which a big phase space of infinitely many coupling constants is considered.
Starting with a single model and considering its perturbations by adding to its action all possible scaling operators,
one is supposed to be able to reach all other models as (multi)critical points.
This was indeed the point of view taken in \cite{Distler}:
Starting with the model $(1,2)$,  all models $(2k-1,2)$ can be reached.

The second approach to two-dimensional quantum gravity is the discrete geometric approach of dynamical tessellations of Riemann surfaces.
For reviews, see e.g. \cite{DGZ, GM}.
In this approach one starts with a matrix model which is integrations of suitable weight function
over some spaces of matrices, 
and interprets the partition function as a suitable way of counting tessellations 
of surfaces.
One can study the partition functions and correlations functions in matrix models 
using the theory of orthogonal polynomials.
This fact plays a crucial role in the following two aspects in the development of the matrix model approaches.
First, the three-term recursion relation for orthogonal polynomials  leads naturally  to the theory of integrable hierarchies, 
secondly, 
the asymptotic analysis of the orthogonal polynomials lead to the double scaling limits of matrix models
in which conformal matters emerge in the matrix model approach.
Kazakov \cite{Kazakov} observed that by fine tuning the potential function in
the matrix model,
one may get a multicritical model with string susceptibility 
\be 
\gamma_0 = - \frac{1}{k},
\ee 
he then suggested that this corresponds to the unitary minimal model $(k+1,k)$ coupled with two-dimensional gravity.
As discovered in later work,
his multicritical models actually correspond to the minimal $(2k-1,2)$-models coupled with gravity.
This will be referred to as the {\em $k$-th multicritial point}.
See e.g. \cite{Gross-Migdal} and the references therein. 
 One of the main results in \cite{Gross-Migdal} is as follows:
at the $k$-th critical point, 
the specific heat $u = F''(t)$ satisfies the following differential equation (see also \cite{Douglas-Shenker, Brezin-Kazakov}):
\be \label{eqn:k-th}
\begin{split}
t & = \frac{k!}{(2k-1)!!} \hat{K}[u(t), \pd_t]^k \cdot 1, \\
 \hat{K}[u(t), \pd_t] &:=- \half \pd_t^2 + u(t) + \pd_t^{-1}u(t) \pd_t,
\end{split}
\ee
where the operator $\hat{K}$ was introduced by Gelfand and Dikiï in their study of higher
order KdV equations \cite{Gelfand-Dikii} . 
Let us recall some key steps in their derivations for matrix models with even potentials.
First in the large $N$ limit, 
\be 
Z_N(g-s) \sim \exp \biggl( g_s^2 \int_0^t dx (t-x) \ln R(x)\biggr),
\ee
where $t = Ng_s$ is the 't Hooft coupling constant,
and $R(x)$ is determined from the string equation:
\be  \label{eqn:string}
-x + W(R(x)) = 0,
\ee
where $W(R)$ is determined from the even potential $U(\phi)$ as follows:
\begin{align}
U(\phi) & = \sum_k U_{2k} \phi^{2k}, & 
W(R) & = \sum \frac{(2k)!}{k!(k-1)!} U_{2k} R^k.
\end{align}
The $k$-th multicritical point occurs when $1 - W(R)$ and
$k - 1$ of its derivatives vanish at, say, $R = 1$.
In other words,
\begin{align}
W_k(R) & = 1- (1-R)^k, & R & = 1- (1-x)^{1/k}, \\ 
U_k(\phi) & = \sum_{j=1}^k (-1)^{j-1} \frac{k!(j-1)!}{k-j)!(2j)!} \phi^{2j}.
\end{align}
See also an earlier derivation in \cite{Kazakov} using loop equations in matrix model theory.
In genus zero one then gets
\be 
u_0(t) = F''_0(t) = t^{1/k},
\ee
where $t = 1-x$ up to a suitable rescaling constant,
and so 
\be 
F_0(t) = \frac{k^2}{(k+1)(2k+1)} t^{2+1/k},
\ee
ignoring the integration constants.
In order to get correlation functions in genus zero,
one needs to consider the deformed  model by introducing scaling operators $O_l$.
The operator $O_l$ introduced in \cite{Gross-Migdal} is $\Tr (2-\phi)^{l+1/2}$ up to some normalization constant in matrix model.
Then the following equation (called the string equation) is derived after taking large N limit in \cite{Gross-Migdal}:
\be \label{eqn:String-g=0}
t= u^k - \sum_i \mu_i u^i.
\ee
After solving for $u$ as a function of $t$ and the $\mu_i$'s and integrating twice,
one can obtain the free energy in genus zero, again ignoring all the integration constants.
The following explicit formula is obtained in \cite{Gross-Migdal}:
\be 
\begin{split}
\corr{O_{l_1} \cdots O_{l_p}}_0 
& = - \frac{\pd}{\pd \mu_{l_1}} \cdots \frac{\pd}{\pd \mu_{l_p}} F(t; \mu_0, \mu_1, \dots) \biggl|_{\mu_j =0} \\
& = - \frac{1}{k} (\frac{\pd}{\pd t})^{p-3} t^{(\sum_i l_i +1-k)/k} \\
& \sim t^{[2+1/k]+\sum_i (l_i/k-1)},
\end{split}
\ee 
These results are compatible with the DDK/KPZ formula,
so the fractional structures naturally emerge in the formalism of the string equation.
In the computations of  the free energy and the correlation functions in higher genera,
the operator $\hat{K}$ of Gelfand and Dikii  emerge naturally.
In the double scaling limit, $\Tr (2-\phi)^{l+1/2}$ corresponds to $L^{l+1/2}$, 
where $L = g_s^2\pd_x^2 + u$ is the Lax operator.
Using the results of Gelfand and Dikii,
\eqref{eqn:k-th} was then obtained. 
Since one can obtain all $(2k-1,2)$-models by fine tuning the coefficients in the even potential $U(\phi)$ after double scaling limit,
as proposed in \cite{Banks-Douglas-Seiberg-Shenker},
there arise the general massive model interpolating all the $(2k-1, 2)$-models,
so that the string equation \eqref{eqn:String}  takes the form
\be \label{eqn:String2}
x = \sum_{k=0}^\infty (k+\half) T_k R_k[u],
\ee 
where $u$ is the specific heat which satisfies 
\be 
\frac{\pd}{\pd T_k} u = \frac{\pd}{\pd x} R_k[u].
\ee
I.e., the partition function $Z$ is a tau-function of the KdV hierarchy specified by the string equation \eqref{eqn:String2}. 
The macroscopic loop operator $w(l)$ used in \cite{Gross-Migdal, Banks-Douglas-Seiberg-Shenker} has an expansion
\be 
w(l) = \sum_{n=0}^\infty \frac{l^{n+1/2}}{\Gamma(n+\frac{3}{2})} \sigma_n.
\ee
It satisfies the loop equation \cite{DVV}:
\be 
\pd_x^2 \frac{\pd}{\pd l} \corr{w(l)} 
= ( \half g_s^2 \pd_x^4 + 2u\pd_x^2 +(\pd_x u) \pd_x)\corr{w(x)} + \frac{\pd_x u}{\sqrt{\pi l}},
\ee
which can be reformulated as the Virasoro constraints satisfied by the partition function. 
According to  \cite{Douglas} the $(q - 1)$-matrix model is related to the p-th generalized
KdV hierarchy in which the p-th multi-critical point  corresponds to the $(p, q)$ minimal CFT coupled to gravity.
The partition functions of these models satisfy the $W$-constraints \cite{DVV, Fukuma-Kawai-Nakayama}.
Since all the $(q,p)$-models can be obtained by taking the double scaling limit by fine tuning the coupling constants in 
the potential of the $(q-1)$-matrix model,
one can again regard all $(p,q)$-models as different phases of a unified model.

More recently, two further developments have grown out of the matrix model approach.
The first is the Eynard-Orantin topological recursion \cite{Eynard-Orantin},
and the second is the determinantal formula for differential systems \cite{Bergere-Eynard}. 
They have both been applied the $(p,q)$-minimal models coupled with gravity and shown to yield the same results \cite{Bergere-Eynard2, Bergere-Borot-Eynard}.
 
The third approach is Witten's two dimensional topological gravity \cite{Witten-TG}.
He related the partition function in the $c=-2$ case to intersection numbers of 
Deligne-Mumford moduli spaces $\Mbar_{g,n}$ of stable algebraic curves  \cite{Witten-phase}.
Based on such connection to  algebraic geometry,
he derived some recursion relations in genus zero and genus one called the topological recursion relations.
In genus zero he used the topological recursion relations to recover the results of Gross-Migdal \cite{Gross-Migdal}.
Based on the conjectural equivalence of topological gravity approach and the matrix model approach,
Witten \cite{Witten} conjectured that the generating series of the intersection numbers of 
$\psi$-classes on $\Mbar_{g,n}$ is a tau-function of the KdV hierarchy specified by the puncture equation. 
Witten also made generalizations to include equivalence with multimatrix models in \cite{Witten2, Witten3}. 
The relevant algebraic geometric objects are the moduli spaces of $r$-spin curves.

Witten's work have inspired much progress in the mathematical literature.
His conjecture in \cite{Witten} has been proved by Kontsevich \cite{Kon} using a different kind of matrix models,
now called the Kontsevich matrix models, by Mirzhakhani \cite{Mir} using hyperbolic geometry,
and by Okounkov and Pandhanripande and some other authors using Hurwitz numbers \cite{Oko-Pan, Kaz-Lan, Chen-Li-Liu, Kiem-Liu}. 
His conjecture in \cite{Witten3} has been proved by \cite{Fab-Sha-Zvo}.
His work also forms the foundation of Gromov-Witten theory and FJRW theory \cite{FJR}, Frobenius manifolds \cite{Dub},
 etc. 

Now the geometric theory of topological gravity is fully established,
the phase transitions in this theory have yet to be fully explored.
A first step towards this direction was made by Dijkgraaf and Witten \cite{Dijkgraaf-Witten}. 
They used the topological recursion relations and the puncture equation to derive the string equation in genus zero:
\be 
u = t_0 +\sum_{i=1}^\infty t_i u^i 
\ee
and interpreted it as a Landau-Ginzburg equation. 
The Landau-Ginzburg potential is then 
\be  
W(u) = - \half u^2 +\sum_{i=0}^\infty \frac{t_i}{i+1} u^i.
\ee 
The $k$-th critical point correspond to the particular value $t_1 =1$, $t_k = -1$ while other $t_i$'s are equal to $0$.  
Such an interpretation of the string equation as the Landau-Ginzburg equation
already has some feature of the Wilson paradigm where infinitely many coupling constants are 
considered.
After the renormalization, the bare coupling constants are supposed to become the renormalized coupling constants. 
The partition functions and the correlation functions are a priori formal power series in $t_i$'s,
and in general one does not expect they are convergent. 
In \cite{Itz-Zub},
Itzykson and Zuber introduced formal power series $\{I_n\}_{n \geq 0}$ in $\{t_i\}_{i \geq 0}$.
They showed 
\ben
F_0 &= &\frac{I_0^3}{6}- \sum_{k \geq 0} \frac{I_0^{k+2}}{k+2} \frac{t_k}{k!}
+\half \sum_{k \geq 0}\frac{I_0^{k+1}}{k+1} \sum_{a+b=k} \frac{t_a}{a!}\frac{t_b}{b!},\\ 
F_1 & = &\frac{1}{24}\log \frac{1}{1-I_1},
\een
and make the ansatz
that for $g\geq 2$, $F_g$ is a polynomial in $I_k/(1 - I_1)^{\frac{2k+1}{3}}$.  

In earlier work we interpreted the $I_n$'s as the renormalized $t_n$'s.
We considered the matrix model of size $1\times 1$ and called it the topological 1D gravity \cite{Zhou-1D} (see also \cite{NY}).
In such a model a renormalization procedure produces the $\{I_n\}_{n \geq 0}$ from $\{t_n\}_{n\geq 0}$ 
in the limit.  
We then understood $\{I_n\}_{n \geq 0}$ and $\{t_n\}_{n \geq 0}$ as different coordinates on a big phase 
space and compute the Jacobian matrices of the coordinate changes in \cite{Zhou-1D}. 
In \cite{Zhang-Zhou} we prove the Itzykson-Zuber ansatz.
In this paper we will use these earlier results to study the phase transitions
in two-dimensional topological gravity.
In particular, 
we will show how to give a uniform way to compute the free energy and correlation functions
at all the critical points using the renormalized coupling constants.
This has the advantage that it is clear that all results are essentially polynomial expressions 
of some simple form,
and also it is easy to check the critical exponents predicted by 
the DDK/KPZ formula at all critical points and at all genera.  

We arrange the rest of the paper as follows.
In Section \ref{sec:Big-Phase} we recall the big phase space of the 2D topological gravity
and the multicritical points on it as seen from the Landau-Gingzburg potential.
We also explain how the dilaton equation gives rise to difficulty in reach the multicritical points starting from the Witten-Kontsevich tau-function.
We explain how to compute the free energy and correlation functions in the renormalized coupling constants 
in Section \ref{sec:I}.
This is the key technical Section.
As illustrations we present some concrete computation in the case of pure gravity (the $k=2$ case) in  Section \ref{sec:k=2}
and the case of Yang-Lee edge singularity coupled with gravity (the $k=3$ case) in Section \ref{sec:k=3}.
We determine the critical exponents of the free energy functions and correlation function in each genus using the renormalized coordinates
in Section \ref{sec:Exponents}.

\section{Critical Points on the Big Phase Space of 2D Topological Gravity}
\label{sec:Big-Phase}

In this Section we recall the big phase space of two-dimensional gravity and the critical points on it.
We will review the special roles played by the puncture operator $\tau_0$ and the dilaton operator $\tau_1$.

\subsection{Big phase space}

Let us begin by recalling the Witten-Kontsevich tau-function 
which is the partition function of the type $(1,2)$-theory in two-dimensional topological gravity.
The relevant genus $g$, $n$-point correlators are defined by the intersection numbers:
\be 
\corr{\tau_{a_1}\cdots \tau_{a_n}}_g:
= \int_{\Mbar_{g,n}} \psi_1^{a_1} \cdots \psi_n^{a_n}.
\ee
For degree reasons, 
the following selection rule is satisfied: $\corr{\tau_{a_1}\cdots \tau_{a_n}}_g\neq 0$
only if
\be  
a_1 +\cdots +a_n = 3g-3+n.
\ee
From this rule it is clear that if all $a_i$’s are bigger than $1$,
then  for each fixed genus $g$,
there are only finitely many nonzero correlators to compute.
So in order to get finite expressions,
we need some recursion relations that remove all the $\tau_0$'s and $\tau_1$'s from the correlators.
This can be achieved by the puncture equation and the dilaton equation to be recalled below.

The genus $g$ free energy is defined by:
\be 
F_g(t_0, t_1, \dots) 
: = \sum_{n: n> 2-2g} \frac{t_{a_1}\cdots t_{a_n}}{n!} \corr{\tau_{a_1}\cdots \tau_{a_n}}_g.
\ee
The all genera free energy is defined by
\be 
F(t_0, t_2, \dots;\lambda) = \sum_{g\geq 0}\lambda^{2g-2} F_g(t_0, t_1, \dots),
\ee
and the partition function 
\be 
Z(t_0, t_1, \dots; \lambda):=\exp F(t_0, t_2, \dots;\lambda)
\ee
is called the Witten-Kontsevich tau-function.

By the big phase space  we mean the infinite-dimensional vector space 
with linear coordinates $\{t_n\}_{n \geq 0}$.
Because $F_g$, $F$ and $Z$ are just defined as formal power series in the $t_n$'s,
a priori, they are understood to make sense only near the origin where all $t_n$'s are equal to $0$.  
Our purpose is to understand the big phase space as a moduli spaces of theories so that
the Witten-Kontsevich tau-function corresponds to the  theory with all $t_n = 0$,
and all other theories of type $(2k-1,2)$ lie in hypersurface with $t_1 =1$.

\subsection{KdV hierarchy}

The specific heat is defined as the second derivative of the free energy:
\be
u(\bt; \lambda) = \sum_{g \geq 0} \lambda^{2g} u_g(\bt)
= \sum_{g \geq 0} \lambda^{2g} \frac{\pd^2F_g(\bt)}{\pd t_0^2}.
\ee
According to Witten Conjecture/Kontsevich Theorem \cite{Witten, Kon},
$u$ should satisfy the KdV hierarchy
\be
\frac{\pd u}{\pd t_n} = \pd_{x} R_{n+1}[u],
\ee
for the sequence of Gelfand-Dickey differential polynomials $R_n[u]$.
define by the Lenard recursion relations:
\be \label{eqn:Lenard}
\begin{split}
& R_0 = 1, \\
& \pd_x R_{n+1} = \frac{1}{2n+1} \biggl(\pd_xu \cdot R_n
+ 2u \cdot \pd_x R_n + \frac{\lambda^2}{ 4} \pd_x^3R_n \biggr).
\end{split}
\ee
For example,
\ben
\pd_x R_1 & = & \pd_x u, \\
R_1 & = & u, \\
\pd_x R_2 & = & u \cdot \pd_x u + \frac{\lambda^2}{12} \pd_x^3 u, \\
R_2 & = & \frac{1}{2} u^2 + \frac{\lambda^2}{12} \pd_x^2u,  \\
\pd_x R_3 & = & \frac{1}{2} u^2\cdot \pd_x u + \frac{\lambda^2}{12} u \cdot \pd_x^3u
+ \frac{\lambda^2}{6}\pd_x u \cdot \pd_x^2u + \frac{\lambda^4}{240}\pd_x^5u, \\
R_3 & = & \frac{1}{6} u^3 + \frac{\lambda^2}{12} u \cdot \pd_x^2 u
+ \frac{\lambda^2}{24} (\pd_x u)^2 + \frac{\lambda^4}{240} \pd_x^4 u.
\een

\subsection{The puncture equation}

The coupling constant $t_0$ corresponds to the {\em puncture operator}.
The recursion relation satisfied by the insertion of the puncture operator is  
the puncture equation \cite{Witten}:
\be  \label{eqn:Puncture}
\frac{\pd F}{\pd t_0} = \sum_{n \geq 0} t_{n+1} \frac{\pd F}{\pd t_n} +  \frac{t_0^2}{2}.
\ee
In terms of correlators,
\be 
\corr{\tau_0\tau_{a_1}\cdots \tau_{a_n}}_g
= \sum_{i=1}^n  \corr{\tau_{a_1} \cdots \tau_{a_i-1} \cdots \tau_{a_n}}_g.
\ee
From this one sees that one can reduce correlators without $\tau_0$.
One well-known application of the puncture equation is to the removal of 
all the insertions of the puncture operator from the correlator.

\subsection{The string equation}
Let us recall another well-known application of the puncture equation: 
The derivation of the string equation from the puncture equation \cite{DVV}.
Taking $\frac{\pd^2}{\pd t_0^2}$ twice on both sides of \eqref{eqn:Puncture} twice
and use the KdV hierarchy for $u$, 
one gets:
\be 
\frac{\pd u}{\pd t_0} = \sum_{n \geq 0} t_{n+1} \frac{\pd u}{\pd t_n} +  1 
=  \sum_{n \geq 0} t_{n+1}\pd_x R_{n+1}[u] + 1.
\ee
The string equation was then obtained by integrating with respect to $t_0$ once:
\be \label{eqn:String}
0 = \sum_{k=1}^\infty ( t_k - \delta_{k,1}) R_k + x.
\ee

\subsection{Landau-Ginzburg equation}

Expanding both sides of the string equation as series in $\lambda$,
one gets by comparing the leading terms the following equation:
\be \label{eqn:Landau-Ginzburg-0}
u_{0} = \sum_{k=1}^\infty \frac{1}{k!} t_k \cdot u_{0}^k + x.
\ee
We will call this equation  the {\em Landau-Ginzburg equation}.
It can be rewritten as follows.
\be \label{eqn:Landau-Ginzburg}
\sum_{k=0}^\infty \frac{1}{k!} (t_k - \delta_{k,1}) \cdot u_{0}^k   = 0.
\ee
Here $u_0$ is the genus zero part of $u$:
\be
u_0 = \frac{\pd F_0}{\pd t_0^2}.
\ee

\subsection{Critical points and multi-critical points}

The idea of Landau and Ginzburg was to minimize some action of some mean field theory.
For the genus zero part it is straightforward to consider:
\be
G(v; \bt) = \sum_{k=0}^\infty \frac{1}{(k+1)!} (t_k - \delta_{k,1}) \cdot v^{k+1}.
\ee
Then the Landau-Ginzburg equation is just the equation:
\be
\frac{\pd}{\pd v} G(v; \bt) = 0.
\ee

Now note
\be
\frac{\pd^2}{\pd v^2} G(v; \bt)
= - 1 + \sum_{k=0}^\infty \frac{1}{k!} t_{k+1} \cdot v^{k}.
\ee
Therefore,
when $t_1= t_2 = \cdots = 0$,
$$\frac{\pd^2}{\pd v^2} G(v; \bt)|_{t_k =0, k\geq 1} = - 1 \neq 0.$$
However,
when $t_1 = 1$, $t_2 = t_3 = \cdots =0$,
$$\frac{\pd^2}{\pd v^2} G(v; \bt)|_{t_1 =1, t_k =0, k\geq 2} = 0.$$
But note
\be
\frac{\pd^3}{\pd v^3} G(v; \bt)
= t_2 + \sum_{k=1}^\infty \frac{1}{k!} t_{k+2} \cdot v^{k}.
\ee
Therefore,
when $t_1 =1$, $t_2 \neq 0$,
one has
\be
\frac{\pd}{\pd v} G(v; \bt) = \frac{\pd^2}{\pd v^2} G(v; \bt) =0, \;\; \frac{\pd^3}{\pd v^3} G(v; \bt) \neq 0,
\ee
Continue this way of thinking,
take $t_1=1$, $t_k=0$ for $k \geq 2$ except for $t_m \neq 0$.
Then
\be
\frac{\pd^k}{\pd v^k} G(v; \bt) =0, \;\; 1\leq k \leq m-1, \;\;\;
\frac{\pd^m}{\pd v^m} G(v; \bt) \neq 0.
\ee
In this situation, physicists say $G(v;\bt)$ as a function of $v$
acquires a multi-critical point when such parameters are chosen.

The $n$-th critical point is where one has
\be
t_n \neq 0, \;\;\;\;\; t_k = 0, \;\; k \neq n.
\ee
These points are where the $(2n-1, 2)$-theories arise.

\subsection{Dilaton equation}
The coupling constant $t_1$ corresponds to the dilaton operator $\tau_1$.
The recursion relation satisfied by the insertion of $\tau_1$ is called the {\em dilaton equation}:
\be 
\corr{\tau_1\tau_{a_1}\cdots \tau_{a_n}}_g
= \sum_{i=1}^n \frac{2a_i+1}{3} \corr{\tau_{a_1} \cdots \tau_{a_n}}_g.
\ee
Using this equation,
one can reduce the calculations of all correlators $\corr{\tau_{a_1}\cdots \tau_{a_n}}_g$
to such correlation so that none of the $a_i$'s is equal to $1$.
As a consequence,
one can show that $F_g(t_0, t_1, \dots)$ takes the following form:
\be 
\begin{split}
F_g(t_0, t_1, \dots)
& = \sum_{\substack{n: n> 2-2g \\a_1,\dots, a_n \neq 1}} 
\frac{1}{n!} \corr{\tau_{a_1}\cdots \tau_{a_n}}_g \prod_{i=1}^n \frac{t_{a_i}}{(1-t_1)^{(2a_i+1)/3}} \\
& =  \sum_{\substack{n: n> 2-2g \\a_1,\dots, a_n \neq 1}} 
\frac{t_{a_1}\cdots t_{a_n}}{n!(1-t_1)^{2g-2+n}} \corr{\tau_{a_1}\cdots \tau_{a_n}}_g.
\end{split}
\ee
From this it seems that we cannot take $t_1 =1$ at all. 
But we are working in a perturbative setting,
and we are working with formal power series with infinitely many other coupling constants $t_n$'s.
We will see later that it is possible to change to the renormalized constants to get finite expressions,
and then it is actually possible to reach the hyperplane $t_1 =1$ where all the multicritical points lie.    
 
\section{Free Energy  and $N$-Point Functions in  Renormalized Coupling Constants}
\label{sec:I}

In this Section we will see that using some formal power series $\{I_n\}_{n \geq 0}$ introduced
by Itzykson-Zuber \cite{Itz-Zub},
it is possible to get finite expressions for the genus $g$ free energy for $g \geq 2$
and $n$-point functions in genus $g$ when $2g-2+n> 0$.
These series have been  interpreted as renormalized coupling constants in earlier work \cite{Zhou-1D, Zhang-Zhou}.

\subsection{Renormalized coupling constants}

Using $\{I_n\}_{n \geq 0}$ introduced by Itzykson-Zuber \cite{Itz-Zub},
it is possible to obtain finite expressions for $F_g$'s.
They are defined by:
\begin{equation}\label{Def:In}
I_k(t):=\sum_{p\geqslant0}\frac{I_{0}^p}{p!}t_{p+k}, \ \ \ k\geqslant 0,
\end{equation}
where $I_0$ is a formal power series solution to the string equation in genus zero:
\be 
I_0 = \sum_{n=0}^\infty \frac{t_n}{n!} I_0^n.
\ee

\subsection{Free energy in renormalized coupling constants}
With these definitions, $F_g$'s  take the following form \cite{Itz-Zub, Zhang-Zhou}:
\begin{align} \label{eqn:F0}
F_0=\frac{1}{6}I_0^3-\sum_{p\geqslant0}\frac{(-1)^pI_0^{p+2}}{(p+2)!} I_p +\frac{1}{2}\sum_{p,q\geqslant0}\frac{(-1)^{p+q}I_0^{p+q+1}}{(p+q+1)q!p!}I_pI_{q}
\end{align}

\begin{equation} \label{eqn:F1}
F_{1}=\sum_{n=1}^{\infty}\frac{1}{24n}I_1^n=\frac{1}{24}\log{\frac{1}{1-I_1}}
\end{equation}

\begin{equation}\label{eqn:Fg}
F_g(t)=\sum_{\sum_{2\leqslant k\leqslant 3g-2}(k-1)l_k=3g-3}\left<\tau_2^{l_2}\tau_{3}^{l_{3}}\cdots\tau_{3g-2}^{l_{3g-2}} \right>_g \prod_{j=2}^{3g-2}\frac{1}{l_j!}\left(\frac{I_j}{(1-I_1)^{\frac{2j+1}{3}}}\right)^{l_j}
\end{equation}
for $g\geqslant2$.
For example, 
\begin{align}
F_{2}=&\frac{7}{1440}\tilde{I}_2^3+\frac{29}{5760}\tilde{I}_2\tilde{I}_3+\frac{1}{1152}\tilde{I}_4\\
F_{3}=&\frac{245}{20736}\tilde{I}_2^6+\frac{193}{6912}\tilde{I}_2^4\tilde{I}_3 +\frac{205}{13824}\tilde{I}_2^2\tilde{I}_3^3 +\frac{583}{580608}\tilde{I}_3^3 +\frac{53}{6912}\tilde{I}_2^3\tilde{I}_4+\frac{1121}{241920}\tilde{I}_2\tilde{I}_3\tilde{I}_4 \nonumber\\ &+\frac{607}{2903040}\tilde{I}_4^2+\frac{17}{11520}\tilde{I}_2^2\tilde{I}_5 +\frac{503}{1451520}\tilde{I}_3\tilde{I}_5 +\frac{77}{414720}\tilde{I}_2\tilde{I}_6+\frac{1}{82944}\tilde{I}_7\\
F_{4}=&\frac{259553}{2488320}\tilde{I}_2^9+\frac{475181}{1244160}\tilde{I}_2^7\tilde{I}_3 +\frac{145693}{331776}\tilde{I}_2^5\tilde{I}_3^2 +\frac{43201}{248832}\tilde{I}_2^3\tilde{I}_3^3 +\frac{134233}{7962624}\tilde{I}_2\tilde{I}_3^4+\frac{14147}{124416}\tilde{I}_2^6\tilde{I}_4 \nonumber\\ &+\frac{83851}{414720}\tilde{I}_2^4\tilde{I}_3\tilde{I}_4+\frac{26017}{331776}\tilde{I}_2^2\tilde{I}_3^2\tilde{I}_4 +\frac{185251}{49766400}\tilde{I}_3^3\tilde{I}_4 +\frac{5609}{276480}\tilde{I}_2^3\tilde{I}_4^2 +\frac{177}{20480}\tilde{I}_3\tilde{I}_4^2 \nonumber\\ &+\frac{175}{995328}\tilde{I}_4^3+\frac{21329}{829440}\tilde{I}_2^5\tilde{I}_5 +\frac{13783}{414720}\tilde{I}_2^3\tilde{I}_3\tilde{I}_5 +\frac{1837}{259200}\tilde{I}_2\tilde{I}_3^2\tilde{I}_5 +\frac{7597}{1382400}\tilde{I}_2^2\tilde{I}_4\tilde{I}_5\nonumber\\
&+\frac{719}{829440}\tilde{I}_3\tilde{I}_4\tilde{I}_5 +\frac{533}{1935360}\tilde{I}_2\tilde{I}_5^2 +\frac{2471}{552960}\tilde{I}_2^4\tilde{I}_6+\frac{7897}{2073600}\tilde{I}_2^2\tilde{I}_3\tilde{I}_6 +\frac{1997}{6635520}\tilde{I}_3^2\tilde{I}_6\nonumber\\
&+\frac{1081}{2322432}\tilde{I}_2\tilde{I}_4\tilde{I}_6 +\frac{487}{18579456}\tilde{I}_5\tilde{I}_6 +\frac{4907}{8294400}\tilde{I}_2^3\tilde{I}_7+\frac{16243}{58060800}\tilde{I}_2\tilde{I}_3\tilde{I}_7 +\frac{1781}{92897280}\tilde{I}_4\tilde{I}_7\nonumber\\
&+\frac{53}{921600}\tilde{I}_2^2\tilde{I}_8 +\frac{947}{92897280}\tilde{I}_3\tilde{I}_8+\frac{149}{39813120}\tilde{I}_2\tilde{I}_9 +\frac{1}{7962624}\tilde{I}_{10}
\end{align}
where
\begin{equation}
\tilde{I}_j=\frac{I_j}{(1-I_1)^{(2j+1)/3}}
\end{equation}
In particular, when $g >1$,
$F_g$ is a weighted homogeneous polynomial in $\tilde{I}_n$'s of degree $3g-3$ if we assign
\be 
\deg \tilde{I}_n = n-1.
\ee
Even though $F_0$ and $F_1$ are not polynomials,
but they are also weighted homogeneous of degree $-3$ and $0$ respectively if we assign
\be 
\deg I_n = n-1.
\ee

\subsection{Loop operator in $I$-coordinates}

The goal of this Section is to show that the correlation functions can also be expressed by the $I$-coordinates.
For this purpose we need to express the loop operator in terms of the $I$-coordinates.
Let us first recall the following result proved in \cite{Zhou-1D}:

\begin{prop}
The vector fields $\{\frac{\pd}{\pd t_k}\}$  can be expressed in terms of
the vector fields  $\{\frac{\pd}{\pd I_l}\}$ as follows:
\be \label{eqn:diff-tk}
\frac{\pd}{\pd t_k}
= \frac{1}{1-I_1}\frac{I_0^k}{k!}  \frac{\pd}{\pd I_0} +
\frac{I_0^k}{k!} \sum_{l \geq 1} \frac{I_{l+1}}{1-I_1} \frac{\pd}{\pd I_l}
+ \sum_{1 \leq l \leq k} \frac{I_0^{k-l}}{(k-l)!} \frac{\pd}{\pd I_l}.
\ee
\end{prop}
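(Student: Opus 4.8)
\emph{Proof proposal.} The plan is to treat $\{I_l\}_{l\geq 0}$ as a new system of formal coordinates on the big phase space and to deduce \eqref{eqn:diff-tk} from nothing more than the chain rule
\be
\frac{\pd}{\pd t_k} = \sum_{l\geq 0} \frac{\pd I_l}{\pd t_k}\,\frac{\pd}{\pd I_l},
\ee
so that the whole task reduces to computing the Jacobian entries $\pd I_l/\pd t_k$. First I would record that $I_0$, defined implicitly by the genus-zero string equation $I_0 = \sum_{n\geq 0}\frac{t_n}{n!}I_0^n$, is a well-defined formal power series with $I_0 = t_0 + (\text{terms of higher degree in the }t_n)$, and correspondingly $I_l = t_l + (\text{higher degree})$ by \eqref{Def:In}; hence the substitution $(t_n)\mapsto(I_n)$ has invertible (in fact identity) linear part and is a genuine change of coordinates, which legitimizes the chain rule above.

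Next I would compute the two types of Jacobian entries. Differentiating the string equation with respect to $t_k$ gives $\frac{\pd I_0}{\pd t_k} = \frac{I_0^k}{k!} + \bigl(\sum_{n\geq 1}\frac{t_n}{(n-1)!}I_0^{n-1}\bigr)\frac{\pd I_0}{\pd t_k}$, and the re-indexing $p=n-1$ identifies the bracketed sum with $I_1$ via \eqref{Def:In}; solving yields $\frac{\pd I_0}{\pd t_k} = \frac{1}{1-I_1}\frac{I_0^k}{k!}$, which is meaningful because $I_1$ lies in the maximal ideal. For $l\geq 1$ I would differentiate $I_l = \sum_{p\geq 0}\frac{I_0^p}{p!}t_{p+l}$ termwise: the explicit $t_{p+l}$-dependence produces $\sum_{p\geq 0}\frac{I_0^p}{p!}\delta_{p+l,k}$, equal to $\frac{I_0^{k-l}}{(k-l)!}$ when $1\leq l\leq k$ and to $0$ otherwise, while the $I_0$-dependence produces $\bigl(\sum_{p\geq 1}\frac{I_0^{p-1}}{(p-1)!}t_{p+l}\bigr)\frac{\pd I_0}{\pd t_k}$, and the substitution $q=p-1$ identifies that sum with $I_{l+1}$. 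Inserting these into the chain rule: the $\frac{\pd}{\pd I_0}$-coefficient is $\frac{1}{1-I_1}\frac{I_0^k}{k!}$; the $I_{l+1}$-parts of $\pd I_l/\pd t_k$ for $l\geq 1$ assemble into $\frac{I_0^k}{k!}\sum_{l\geq 1}\frac{I_{l+1}}{1-I_1}\frac{\pd}{\pd I_l}$; and the remaining parts contribute $\sum_{1\leq l\leq k}\frac{I_0^{k-l}}{(k-l)!}\frac{\pd}{\pd I_l}$. This is exactly \eqref{eqn:diff-tk}.

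The computation itself is elementary; the point that genuinely requires care — and which I expect to be the main obstacle — is the formal bookkeeping. One must verify that all the infinite sums in sight (the defining series for $I_0$ and the $I_l$, the two resummations producing $I_1$ and $I_{l+1}$, and the chain-rule sum over $l$ when applied to the generating functions of interest) converge in the appropriate topology on the ring of formal power series, so that termwise differentiation and re-indexing are legitimate, and that the coordinate change $(t_n)\leftrightarrow(I_n)$ is actually invertible at that level. These facts are part of the foundational setup of the $I$-coordinates carried out in \cite{Zhou-1D}, and in the present argument they may simply be invoked.
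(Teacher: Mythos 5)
Your proposal is correct: the chain-rule/Jacobian computation goes through exactly as you describe, and the two resummations correctly identify $\sum_{n\geq 1}\frac{t_n I_0^{n-1}}{(n-1)!}=I_1$ and $\sum_{p\geq 1}\frac{I_0^{p-1}}{(p-1)!}t_{p+l}=I_{l+1}$, yielding precisely \eqref{eqn:diff-tk}. The paper itself gives no proof here but only recalls the result from \cite{Zhou-1D}, where (as the introduction states) it is obtained by computing the Jacobian matrices of the coordinate change $(t_n)\leftrightarrow(I_n)$ -- i.e., essentially the same argument you give.
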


For example,
\bea
&& \frac{\pd}{\pd t_0} 
= \frac{1}{1-I_1} \frac{\pd}{\pd I_0}
+ \sum_{l \geq 1} \frac{I_{l+1}}{1-I_1} \frac{\pd}{\pd I_l}, \label{eqn:Pd-t0-in-I} \\
&& \frac{\pd}{\pd t_1} 
=  \frac{I_0}{1-I_1} \frac{\pd}{\pd I_0} + \biggl(\frac{I_2I_0}{1-I_1} + 1\biggr) \frac{\pd}{\pd I_1}
+ \sum_{l \geq 2} \frac{I_{l+1}I_0}{1-I_1} \frac{\pd}{\pd I_l}.
\eea
Recall the loop operator is defined by:
\be 
\nabla(x)
= \sum_{n \geq 0} \frac{(2n+1)!!}{x^{n+\frac{3}{2}} }\frac{\pd}{\pd t_n}.
\ee

\begin{thm}
The loop operator can be expressed in terms of the $I$-coordinates as follows: 
\be \label{eqn:nabla(x)}
\begin{split}
\nabla(x)  
 = &    \frac{1}{(x-2I_0)^{3/2}} \cdot \frac{1}{1-I_1}   \frac{\pd}{\pd I_0} \\
 + &
  \sum_{l \geq 1} \biggl(\frac{1}{(x-2I_0)^{3/2}} \cdot  \frac{I_{l+1}}{1-I_1} 
  +\frac{(2l+1)!!}{(x-2I_0)^{(2l+3)/2}} \biggr)\frac{\pd}{\pd I_l}.
\end{split}
\ee
\end{thm}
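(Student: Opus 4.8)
\emph{Proof plan.} The plan is to substitute the formula \eqref{eqn:diff-tk} for $\pd/\pd t_k$ into the definition of $\nabla(x)$ and to resum the resulting double series in closed form. Grouping the first two terms of \eqref{eqn:diff-tk} gives
\be
\frac{\pd}{\pd t_k} = \frac{1}{1-I_1}\frac{I_0^k}{k!}\biggl(\frac{\pd}{\pd I_0}+\sum_{l\geq1}I_{l+1}\frac{\pd}{\pd I_l}\biggr)+\sum_{1\leq l\leq k}\frac{I_0^{k-l}}{(k-l)!}\frac{\pd}{\pd I_l},
\ee
so that, after interchanging the order of summation,
\be
\begin{split}
\nabla(x) = & \frac{1}{1-I_1}\biggl(\sum_{n\geq0}\frac{(2n+1)!!}{x^{n+3/2}}\frac{I_0^n}{n!}\biggr)\biggl(\frac{\pd}{\pd I_0}+\sum_{l\geq1}I_{l+1}\frac{\pd}{\pd I_l}\biggr) \\
& + \sum_{l\geq1}\biggl(\sum_{n\geq l}\frac{(2n+1)!!}{x^{n+3/2}}\frac{I_0^{n-l}}{(n-l)!}\biggr)\frac{\pd}{\pd I_l}.
\end{split}
\ee
Thus the whole statement reduces to evaluating the two scalar series in parentheses.

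For both series I would use the double-factorial generating function
\be \label{eqn:df-gf}
(1-2y)^{-(l+3/2)}=\frac{1}{(2l+1)!!}\sum_{m\geq0}\frac{(2m+2l+1)!!}{m!}\,y^m, \qquad l\geq 0,
\ee
which for $l=0$ reads $\sum_{n\geq0}\frac{(2n+1)!!}{n!}y^n=(1-2y)^{-3/2}$. Applying it with $y=I_0/x$ and $l=0$ gives $\sum_{n\geq0}\frac{(2n+1)!!}{x^{n+3/2}}\frac{I_0^n}{n!}=x^{-3/2}(1-2I_0/x)^{-3/2}=(x-2I_0)^{-3/2}$. For the second series I would reindex by $n=m+l$, turning it into $x^{-(l+3/2)}\sum_{m\geq0}\frac{(2m+2l+1)!!}{m!}(I_0/x)^m$, and apply \eqref{eqn:df-gf} again to get $(2l+1)!!\,(x-2I_0)^{-(l+3/2)}$. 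Substituting these closed forms back in and distributing over $\pd/\pd I_0$ and the $\pd/\pd I_l$ reproduces \eqref{eqn:nabla(x)}: the $\pd/\pd I_0$-coefficient is $\frac{1}{(1-I_1)(x-2I_0)^{3/2}}$, and for $l\geq1$ the $\pd/\pd I_l$-coefficient is the sum of $\frac{I_{l+1}}{(1-I_1)(x-2I_0)^{3/2}}$ coming from the first series and $\frac{(2l+1)!!}{(x-2I_0)^{(2l+3)/2}}$ coming from the second.

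The identity \eqref{eqn:df-gf} itself is immediate from the generalized binomial theorem, using $\binom{-(l+3/2)}{m}(-2)^m=\frac{(2m+2l+1)!!}{(2l+1)!!\,m!}$, or alternatively it follows by a short induction on $l$ via $\frac{d}{dy}$. I do not expect any genuine obstacle here: the content is purely the bookkeeping of the resummation together with the shifted double-factorial generating function. The only point needing a word of care is that the interchange of the $n$- and $l$-summations and the passage to closed form must be read as identities of formal power series in the $I_n$'s and in $x^{-1/2}$; this is legitimate because, for each fixed monomial in the $I_n$'s and each fixed power of $x^{-1/2}$, only finitely many terms of the original double sum contribute.
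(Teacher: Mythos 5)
Your proposal is correct and follows essentially the same route as the paper's proof: substitute \eqref{eqn:diff-tk} into the definition of $\nabla(x)$, group the $\frac{1}{1-I_1}$-terms, interchange the order of summation in the remaining piece, and resum using the shifted double-factorial generating function $\sum_{m\geq 0}\frac{(2m+2l+1)!!}{m!}y^m=(2l+1)!!\,(1-2y)^{-(2l+3)/2}$ (which the paper also derives by repeated differentiation of the $l=0$ case). No substantive differences.
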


\begin{proof}
We have:
\ben 
\nabla(x) & = &  \sum_{n \geq 0} \frac{(2n+1)!!}{x^{n+\frac{3}{2}} }  \biggl( \frac{1}{1-I_1}\frac{I_0^n}{n!}  \frac{\pd}{\pd I_0} +
\frac{I_0^n}{n!} \sum_{l \geq 1} \frac{I_{l+1}}{1-I_1} \frac{\pd}{\pd I_l}
+ \sum_{1 \leq l \leq n} \frac{I_0^{n-l}}{(n-l)!} \frac{\pd}{\pd I_l} \biggr) \\
& = & \sum_{n \geq 0} \frac{(2n+1)!!}{x^{n+\frac{3}{2}} } \frac{I_0^n}{n!} \cdot  \biggl( \frac{1}{1-I_1}   \frac{\pd}{\pd I_0} 
+ \sum_{l \geq 1} \frac{I_{l+1}}{1-I_1} \frac{\pd}{\pd I_l} \biggr) \\
& + &   \sum_{l=1}^\infty \frac{(2l+1)!!}{x^{l+\frac{3}{2}}} (1-\frac{2I_0}{x})^{-(2l+3)/2} \frac{\pd}{\pd I_l}  \\
& = &  \frac{1}{x^{\frac{3}{2}}}  \frac{1}{(1-\frac{2I_0}{x})^{3/2}} \biggl( \frac{1}{1-I_1}   \frac{\pd}{\pd I_0} +
  \sum_{l \geq 1} \frac{I_{l+1}}{1-I_1} \frac{\pd}{\pd I_l}  \biggr) \\
& + &   \sum_{l=1}^\infty \frac{(2l+1)!!}{(x-2I_0)^{(2l+3)/2}} \frac{\pd}{\pd I_l}.
\een
In the above we have some combinatorial identities as follows.
First note
\ben
\sum_{n=0}^\infty \frac{(2n+1)!!}{n!} t^n 
& = &  \sum_{n=0}^\infty \frac{(2n+1)!}{n!n!} (\frac{t}{2})^n = (1-2t)^{-3/2}.
\een
Take $\frac{d}{dt}$ on both sides:
\ben
\sum_{n=0}^\infty \frac{(2n+3)!!}{n!} t^n 
& = &  3  \cdot (1-2t)^{-5/2}.
\een
Repeat this for several times:
\ben
\sum_{n=0}^\infty \frac{(2n+2l+1)!!}{n!} t^n 
& = & (2l+1)!!   \cdot (1-2t)^{-(2l+3)/2}.
\een
 \end{proof}

\subsection{Correlation functions in genus $\geq 1$ in $I$-coordinates}

Recall 
\be
F_{1}=\frac{1}{24}\log{\frac{1}{1-I_1}}.
\ee
By applying the loop operator repeatedly,
one can get all the $n$-point functions in genus $1$, for example
\be 
\nabla(x) F_1 
= \frac{1}{24} \frac{1}{(x-2I_0)^{3/2}} \cdot \frac{I_2}{(1-I_1)^2} + \frac{1}{8} \frac{1}{(x-2I_0)^{5/2}}\cdot \frac{1}{1-I_1},
\ee
\ben
&& \nabla(x_1)\nabla(x_2)F_1 \\
& = & \frac{5}{8(x_1-2I_0)^{\frac{3}{2}} (x_2-2I_0)^{\frac{7}{2}}(1-I_1)^2}
+  \frac{3}{8(x_1-2I_0)^{\frac{5}{2}} (x_2-2I_0)^{\frac{5}{2}}(1-I_1)^2} \\
& + & \frac{5}{8(x_1-2 I_0)^{\frac{7}{2}} (x_2-2 I_0)^{\frac{3}{2}} (1-I_1)}  \\
& + & \frac{I_2}{4(x_1-2I_0)^{\frac{3}{2}}(x_2-2I_0)^{\frac{5}{2}} (1-I_1)^3}
+   \frac{I_2}{4(x_1-2I_0)^{\frac{5}{2}}(x_2-2I_0)^{\frac{3}{2}} (1-I_1)^3} \\
&+ & \frac{I_3}{24(x_1-2 I_0)^{\frac{3}{2}} (x_2-2 I_0)^{\frac{3}{2}}  (1-I_1)^3} 
+ \frac{I_2^2}{12(x_1-2 I_0)^{\frac{3}{2}} (x_2-2 I_0)^{\frac{3}{2}}  (1-I_1)^4}.
\een
Note these are polynomial expressions in $(x_j-2I_0)^{-1/2}$, $\frac{1}{1-I_1}$ and $I_k$, $k \geq 2$.
When $g > 1$, 
$F_g$ is already a polynomial in $\frac{1}{1-I_1}$ and $I_k$ for $k \geq 2$.
For example,
\be
F_{2}=\frac{7}{1440}\frac{I_2^3}{(1-I_1)^7}+\frac{29}{5760}\frac{I_2I_3}{(1-I_1)^4}+\frac{1}{1152}\frac{I_4}{(1-I_1)^3}.
\ee
Then the $n$-point correlation functions are polynomials in $(x_j-2I_0)^{-1/2}$, $\frac{1}{1-I_1}$ and $I_k$, $k \geq 2$.
For example,
\ben
\nabla(x)F_2 & = &  \frac{7 I_2^4}{288(1-I_1)^7 (x-2I_0)^{\frac{3}{2}} }
+  \frac{5I_2^2I_3}{144(1-I_1)^6  (x-2I_0)^{\frac{3}{2}}} \\
&+ &   \frac{11I_2I_4}{1440(1-I_1)^5  (x-2I_0)^{\frac{3}{2}} }
+ \frac{29I_3^2}{5760(1-I_1)^5 (x-2I_0)^{\frac{3}{2}}}  \\ 
&+ &   \frac{I_5}{1152(1-I_1)^4(x-2I_0)^{\frac{3}{2}}}   
+  \frac{7I_2^3}{96(1-I_1)^6 (x-2I_0)^{\frac{5}{2}} }    \\
& + &   \frac{29I_2I_3}{480 (1-I_1)^5(x-2I_0)^{\frac{5}{2}} }  
+  \frac{I_4}{128 (1-I_1)^4 (x-2I_0)^{\frac{5}{2}} } \\
& + & \frac{7I_2^2}{32(1-I_1)^5(x-2I_0)^{\frac{7}{2}}}    +  \frac{29I_3}{384(1-I_1)^4(x-2I_0)^{\frac{7}{2}}} \\
&  + & \frac{203I_2}{384(1-I_1)^4(x-2I_0)^{\frac{9}{2}}}  
+ \frac{105}{128(1-I_1)^3(x-2I_0)^{\frac{11}{2}}} .
\een
Such computations are very easy to automate with some computer algebra systems such as Maple. 

Assign the following grading:
\begin{align}
\deg (x-2I_0)^{-(2k+1)/2} & = \frac{2k+1}{3}, & \deg \frac{1}{1-I_1} & = -1, \\ \deg I_k & = \frac{2k+1}{3}, k > 1. 
\end{align}
Then from 
\ben
\nabla(x) ((x'-2I_0)^{-\frac{2k+1}{2}}) & = & - \frac{2k+1}{2} \cdot (x-2I_0)^{-\frac{3}{2}} \cdot \frac{1}{1-I_1} \cdot  (x'-2I_0)^{-\frac{2k+3}{3}}, \\
\nabla(x)(\frac{1}{1-I_1}) & = & (x-2I_0)^{-\frac{3}{2}} \frac{I_2}{(1-I_1)^3} + \frac{3}{(x-2I_0)^{\frac{5}{2}}}, \\
\nabla(x) (I_l) & = & \frac{1}{(x-2I_0)^{3/2}} \cdot  \frac{I_{l+1}}{1-I_1} 
  +\frac{(2l+1)!!}{(x-2I_0)^{(2l+3)/2}}, l\geq 2,
\een
one sees that $\nabla(x)$ has degree $\frac{2}{3}$.
It follows that $\nabla(x_1) \cdots \nabla(x_n)F_g$ is a weighted homogeneous polynomial of degree $\frac{2}{3}n$ 
in $(x_j-2I_0)^{-1/2}$, $\frac{1}{1-I_1}$ and $I_k$, $k \geq 2$.
Alternatively,
 if we introduce
\begin{align}
\hat{x} & =  \frac{1}{(x-2I_0)^{\half}(1-I_1)^{\frac{1}{3}}}, & \hat{I}_l & =  \frac{I_l}{(1-I_1)^{\frac{2l+1}{3}}},
\end{align}
then $(1-I_1)^{-\frac{2n}{3}} \nabla(x_1) \cdots \nabla(x_n)F_g$ is a polynomial in $\hat{x}_1$, $\dots$, $\hat{x}_n$
and $\hat{I}_l$, $l > 1$,
such that the exponent of $\hat{x}_i$ in each monomial term is an odd integer $\geq 3$.
For example, 
\ben 
\nabla(x) F_1 
& = & (1-I_1)^{\frac{2}{3}} \cdot \biggl[\frac{1}{24} 
\biggl(\frac{1}{(x-2I_0)^{\half}(1-I_1)^{\frac{1}{3}}}\biggr)^3 \cdot \frac{I_2}{(1-I_1)^{\frac{5}{3}}} \\
& + & \frac{1}{8} \biggl(\frac{1}{(x-2I_0)^{\half}(1-I_1)^{\frac{1}{3}} }\biggr)^5 \biggr].
\een

\subsection{Correlation functions in genus $0$ in $I$-coordinates}

Recall 
\ben
F_0& = & \frac{1}{6}I_0^3-\sum_{p\geqslant0}\frac{(-1)^pI_0^{p+2}}{(p+2)!} I_p +\frac{1}{2}\sum_{p,q\geqslant0}\frac{(-1)^{p+q}I_0^{p+q+1}}{(p+q+1)q!p!}I_pI_{q} \\
& = &  \frac{1}{6}I_0^3 
+ \sum_{p\geqslant1}\frac{(-1)^pI_0^{p+2}}{(p+2)\cdot p!} I_p
 +\frac{1}{2}\sum_{p,q\geqslant1}\frac{(-1)^{p+q}I_0^{p+q+1}}{(p+q+1)q!p!}I_pI_{q} \\
\een
This is no longer a finite expression.
The direct computation of the one-point function in genus $0$ leads to complicated expressions:
\ben
\nabla(x) F_0 & = &  \frac{1}{(x-2I_0)^{3/2}} \cdot \frac{1}{1-I_1}  \biggl( \frac{I_0^2}{2} +\sum_{p\geq  1} \frac{(-1)^pI_0^{p+1}}{p!} I_p   \\
& + & \frac{1}{2}\sum_{p,q\geq1}\frac{(-1)^{p+q}I_0^{p+q}}{p!q!}I_pI_{q} \biggr) \\ 
& + &
  \sum_{l \geq 1} \biggl(\frac{1}{(x-2I_0)^{3/2}} \cdot  \frac{I_{l+1}}{1-I_1} 
  +\frac{(2l+1)!!}{(x-2I_0)^{(2l+3)/2}} \biggr)  \\
&& \cdot \biggl( \frac{(-1)^l}{(l+2) \cdot l!} I_0^{l+2}I_l 
+\sum_{q\geq 1}\frac{(-1)^{l+q}I_0^{l+q+1}}{(l+q+1) \cdot l!q!}I_lI_{q} \biggr).
\een
A trick has been used to obtain the following result in \cite{Zhang-Zhou}:
If one sets
\be
y(x) :=   x^{\frac{1}{2}}-\sum_{n=0}^{\infty}t_n  \frac{1}{(2n-1)!!}x^{n-\frac{1}{2}} 
- \sum_{n=0}^{\infty} \frac{\partial F_0}{\partial t_n} (2n+1)!! x^{-n-\frac{3}{2}}, 
\ee
then one has:
\be 
y = - 
\sum_{n=1}^{\infty} \frac{I_n-\delta_{n,1}}{(2n-1)!!} (x-2I_0)^{n-\frac{1}{2}}.
\ee

\ben 
&& - \nabla(x_1) y(x_2) \\
& = & -
 \frac{1}{(x_1-2I_0)^{3/2}} \cdot \frac{1}{1-I_1} \sum_{n=1}^{\infty} \frac{I_n-\delta_{n,1}}{(2n-1)!!} \cdot (2n-1)\cdot (x_2-2I_0)^{n-\frac{3}{2}} \\
&+&\sum_{n=1}^{\infty} \frac{(x_2-2I_0)^{n-\frac{1}{2}}}{(2n-1)!!}\cdot 
\biggl(\frac{1}{(x_1-2I_0)^{3/2}} \cdot  \frac{I_{n+1}}{1-I_1} 
  +\frac{(2n+1)!!}{(x_1-2I_0)^{(2n+3)/2}} \biggr)\\ 
&=&\sum_{n=0}^{\infty} (2n+1) \frac{(x_2-2I_0)^{n-\frac{1}{2}}}{(x_1-2I_0)^{n+\frac{3}{2}}} \\ 
& = &  \frac{1}{(x_1-2I_0)^{\frac{1}{2}} (x_2-2I_0)^{\frac{1}{2}}} \cdot 
\frac{x_1+ x_2-4I_0}{(x_1-x_2)^2}.
\een
Note we also have
\ben
&& \nabla(x_1) \biggl(x_2^{\frac{1}{2}}-\sum_{n=0}^{\infty}t_n  \frac{1}{(2n-1)!!}x_2^{n-\frac{1}{2}} \biggr) \\
& = &  - \sum_{n}^\infty (2n+1) \frac{x_2^{n-\half}}{x_1^{n+\frac{3}{2}}} \\
& = & - \frac{x_2^{-\frac{1}{2}}}{x_1^{\frac{3}{2}}} \frac{1+ \frac{x_2}{x_1}}{(1-\frac{x_2}{x_1})^2} 
= - \frac{x_1+x_2}{x_1^{\frac{1}{2}}x_2^{\frac{1}{2}}(x_1-x_2)^2}.
\een
It follows that we have
\be 
\begin{split}
& \nabla(x_1)\nabla(x_2)F_0 \\
= &\frac{1}{(x_1-2I_0)^{\frac{1}{2}} (x_2-2I_0)^{\frac{1}{2}}} \cdot 
\frac{x_1+ x_2-4I_0}{(x_1-x_2)^2}
- \frac{x_1+x_2}{x_1^{\frac{1}{2}}x_2^{\frac{1}{2}}(x_1-x_2)^2}.
\end{split}
\ee
By applying loop operator again we get: 
\be 
\begin{split}
& \nabla(x_1)\nabla(x_2)\nabla(x_3)F_0 \\
= &\frac{1}{(x_1-2I_0)^{\frac{3}{2}} (x_1-2I_0)^{\frac{3}{2}}(x_1-2I_0)^{\frac{3}{2}}(1-I_1)}.
\end{split}
\ee
As in the $g> 0$ case,  if we introduce
\begin{align*}
\hat{x} & =  \frac{1}{(x-2I_0)^{\half}(1-I_1)^{\frac{1}{3}}}, & \hat{I}_l & =  \frac{I_l}{(1-I_1)^{\frac{2l+1}{3}}},
\end{align*}
then $(1-I_1)^{-\frac{2n}{3}} \nabla(x_1) \cdots \nabla(x_n)F_0$ is a polynomial in $\hat{x}_1$, $\dots$, $\hat{x}_n$
and $\hat{I}_l$, $l > 1$ when $n \geq 3$.
For example, 
\be
\begin{split}
& \nabla(x_1) \cdots \nabla(x_4)F_0 \\
 = & \frac{I_2}{(x_1-2I_0)^{\frac{3}{2}} (x_1-2I_0)^{\frac{3}{2}}(x_1-2I_0)^{\frac{3}{2}}(x_4-2I_0)^{\frac{3}{2}} (1-I_1)^3} \\
 + &  \frac{3}{(x_1-2I_0)^{\frac{3}{2}} (x_1-2I_0)^{\frac{3}{2}}(x_1-2I_0)^{\frac{3}{2}} (x_4-2I_0)^{\frac{3}{2}}(1-I_1)^2} \\
& \biggl(\frac{1}{x_1-2I_0} + \cdots + \frac{1}{x_4-2I_0} \biggr).
\end{split}
\ee

We summarize the above discussions for all genera in the following:

\begin{thm}
For $g\geq 0$, $n \geq 1$,
when $2g-2+n> 0$, the genus $g$, $n$-point correlation functions
of the two-dimensional topological gravity,
$(1-I_1)^{-\frac{2n}{3}}\nabla_{x_1}\cdots \nabla_{x_n}F_g$,
are weighted homogeneous polynomials in $\hat{x}_1, \dots, \hat{x}_n$ and $\hat{I}_l$ of degree $6g-6+5n$, 
if one assigns 
\begin{align}
\deg \hat{x}_n & = 1, & \deg \hat{I}_l & = 2(l-1).
\end{align} 
\end{thm}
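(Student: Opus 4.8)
\medskip
\noindent\textbf{Proof plan.}
Fix $g\geq 0$ and induct on $n$, starting from the minimal admissible value ($n=3$ if $g=0$, $n=1$ if $g=1$, $n=0$ if $g\geq 2$). It is convenient to prove a slightly stronger assertion: for every admissible pair $(g,n)$ there is a polynomial $P^g_n\in\bQ[\hat x_1,\dots,\hat x_n,\hat I_2,\hat I_3,\dots]$, weighted homogeneous of degree $6g-6+5n$ for $\deg\hat x_i=1$, $\deg\hat I_l=2(l-1)$, with
\be
\nabla(x_1)\cdots\nabla(x_n)F_g=(1-I_1)^{\frac{2n}{3}}P^g_n ,
\ee
and moreover each $\hat x_i$ occurs to an odd power $\geq 3$ in every monomial of $P^g_n$. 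The theorem is then the statement that $(1-I_1)^{-\frac{2n}{3}}\nabla(x_1)\cdots\nabla(x_n)F_g=P^g_n$; carrying the exact prefactor $(1-I_1)^{2n/3}$ and the parity/lower bound on the $\hat x_i$-exponents along the induction is what makes it close up.

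\emph{Base cases.} For $g\geq 2$, $n=0$: $F_g$ is precisely the weighted homogeneous polynomial in $\tilde I_j=\hat I_j$ of \eqref{eqn:Fg}, of degree $3g-3$ for $\deg\tilde I_j=j-1$, hence of degree $2(3g-3)=6g-6$ for $\deg\hat I_j=2(j-1)$; there are no $\hat x$'s and the prefactor is $(1-I_1)^0=1$. For $g=1$, $n=1$: substituting the hatted variables into the formula for $\nabla(x)F_1$ above turns it into $(1-I_1)^{2/3}\bigl(\tfrac{1}{24}\hat x^3\hat I_2+\tfrac18\hat x^5\bigr)$, weighted homogeneous of degree $5=6\cdot 1-6+5\cdot 1$, with $\hat x$-exponents $3$ and $5$. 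For $g=0$, $n=3$: the formula $\nabla(x_1)\nabla(x_2)\nabla(x_3)F_0=\prod_{i=1}^3(x_i-2I_0)^{-3/2}(1-I_1)^{-1}$ becomes $(1-I_1)^{2}\prod_{i=1}^3\hat x_i^3$, weighted homogeneous of degree $9=6\cdot 0-6+5\cdot 3$, with all $\hat x_i$-exponents $3$.

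\emph{Inductive step.} Apply the derivation $\nabla(x_{n+1})$ to $\nabla(x_1)\cdots\nabla(x_n)F_g=(1-I_1)^{2n/3}P^g_n$. The only inputs are the actions of $\nabla(x)$ on the ``letters'' $(x'-2I_0)^{-1/2}$, on $I_1$ (equivalently on $1/(1-I_1)$), and on $I_l$ for $l\geq 2$, which are recorded just before the theorem. Substituting the hatted variables into those, a direct computation gives (writing $x=x_{n+1}$ on the right)
\ben
\nabla(x)\hat x_i &=& (1-I_1)^{2/3}\bigl(\hat x_i^3\hat x^3+\tfrac13\hat x_i\hat x^3\hat I_2+\hat x_i\hat x^5\bigr), \\
\nabla(x)\hat I_l &=& (1-I_1)^{2/3}\bigl(\hat x^3\hat I_{l+1}+(2l+1)!!\,\hat x^{2l+3}+\tfrac{2l+1}{3}\hat I_l\hat I_2\hat x^3+(2l+1)\hat I_l\hat x^5\bigr), \\
\nabla(x)(1-I_1)^{s} &=& -\,s\,(1-I_1)^{s+2/3}\bigl(\hat x^3\hat I_2+3\hat x^5\bigr).
\een
Three features drive the induction: (i) every right-hand side is $(1-I_1)^{2/3}$ times a polynomial in $\hat x$, the old $\hat x_i$ and the $\hat I_\bullet$, with no leftover fractional power of $1-I_1$; (ii) in that polynomial the new variable $\hat x$ occurs only to odd powers $\geq 3$, while an old $\hat x_i$ keeps its parity and drops by at most one exponent under $\partial/\partial\hat x_i$; (iii) with respect to the grading, $\nabla(x)/(1-I_1)^{2/3}$ raises the degree of $\hat x_i$ and of $\hat I_l$ by exactly $5$, and produces from $(1-I_1)^s$ the degree-$5$ polynomial $-s(\hat x^3\hat I_2+3\hat x^5)$. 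Expanding $\nabla(x_{n+1})\bigl((1-I_1)^{2n/3}P^g_n\bigr)$ by Leibniz and using (i) shows it equals $(1-I_1)^{2(n+1)/3}P^g_{n+1}$ with
\be
P^g_{n+1}=\nabla(x_{n+1})P^g_n\big/(1-I_1)^{2/3}\;-\;\tfrac{2n}{3}\bigl(\hat x_{n+1}^3\hat I_2+3\hat x_{n+1}^5\bigr)P^g_n
\ee
a polynomial in $\hat x_1,\dots,\hat x_{n+1},\hat I_2,\hat I_3,\dots$; feature (ii) together with the inductive hypothesis gives the parity/lower bound on the $\hat x_i$-exponents of $P^g_{n+1}$; and feature (iii) gives that $P^g_{n+1}$ is weighted homogeneous of degree $(6g-6+5n)+5=6g-6+5(n+1)$, which completes the step.

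The one place where genuine computation is unavoidable is the verification of the three displayed derivative formulas: one must check that the fractional powers of $(1-I_1)^{1/3}$ appearing when one translates between the $I_l$'s and the hatted variables always recombine, in the action of $\nabla(x)$, into a single overall factor $(1-I_1)^{2/3}$ with an honest polynomial in the hatted variables remaining. This is the main obstacle; granting it, the rest is the Leibniz rule and the degree bookkeeping above. As a consistency check, these formulas reproduce the already-computed $\nabla(x_1)\nabla(x_2)F_1$, $\nabla(x)F_2$ and $\nabla(x_1)\cdots\nabla(x_4)F_0$.
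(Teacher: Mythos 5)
Your proposal is correct and follows essentially the same route as the paper: the paper records the action of $\nabla(x)$ on the generators $(x'-2I_0)^{-1/2}$, $\frac{1}{1-I_1}$ and $I_l$ just before the theorem and then asserts that the weighted homogeneity "can be checked by induction," which is exactly the induction on $n$ you carry out (your three displayed formulas in the hatted variables are correct, and they fix the sign slip in the paper's formula for $\nabla(x)\bigl((x'-2I_0)^{-(2k+1)/2}\bigr)$). Your version is more complete than the paper's sketch in that it tracks the exact prefactor $(1-I_1)^{2n/3}$ and the parity bound on the $\hat{x}_i$-exponents through the induction, which is what the paper leaves implicit.
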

 
 The weighted homogeneity can be checked by induction.
 The following are some examples: 
\ben
&& (1-I_1)^{-2} \nabla(x_1)\nabla(x_2)\nabla(x_3)F_0 = \hat{x}_1^3\hat{x}_2^3\hat{x}_3^3, \\
&&  (1-I_1)^{-\frac{8}{3}} \nabla(x_1)\cdots\nabla(x_4)F_0 = \hat{x}_1^3\cdots\hat{x}_4^3\hat{I}_2
+ 3\hat{x}_1^3\cdots\hat{x}_4^3 \sum_{i=1}^4 \hat{x}_i^2, \\
&& (1-I_1)^{-\frac{2}{3}}\nabla(x)F_1 = \frac{1}{24} \hat{x}^3\hat{I}_2 +\frac{1}{8}\hat{x}^5, \\
&& (1-I_1)^{-\frac{4}{3}}\nabla(x_1)\nabla(x_2)F_1 = \frac{5}{8}\hat{x}_1^3 \hat{x}_2^7
+ \frac{3}{8}\hat{x}_1^5\hat{x}_2^5+\frac{5}{8} \hat{x}_1^7\hat{x}_2^3 \\
&&\quad \quad + \frac{1}{4}\hat{x}_1^3\hat{x}_2^5\hat{I}_2+ \frac{1}{4}\hat{x}_1^5\hat{x}_2^3\hat{I}_2
+ \frac{1}{24} \hat{x}_1^3\hat{x}_2^3\hat{I}_3 +\frac{1}{12}\hat{x}_1^3\hat{x}_2^3\hat{I}_2^2, \\
&& (1-I_1)^{-\frac{2}{3}} \nabla(x)F_2  =  \frac{7}{288} \hat{x}^3 \hat{I}_2^4
+  \frac{5}{144} \hat{x}^3 \hat{I}_2^2\hat{I}_3 
+    \frac{11}{1440} \hat{x}^3 \hat{I}_2\hat{I}_4  \\
&& \quad\quad + \frac{29}{5760}\hat{x}^3  \hat{I}_3^2  
+ \frac{1}{1152} \hat{x}^3  \hat{I}_5   
+  \frac{7}{96} \hat{x}^5 \hat{I}_2^3 
+   \frac{29}{480}\hat{x}^5 \hat{I}_2\hat{I}_3 
+  \frac{1}{128} \hat{x}^5\hat{I}_4   \\
&& \quad\quad + \frac{7}{32} \hat{x}^7\hat{I}_2^2      +  \frac{29}{384} \hat{x}^7 \hat{I}_3  
+  \frac{203}{384} \hat{x}^9 \hat{I}_2 
+ \frac{105}{128} \hat{x}^{11}. 
\een

\section{The Pure Gravity: The $k=2$ Critical Point}

In this Section we consider the simplest multicritical point given by the $(3,2)$-model coupled with topological gravity.
The central charge of this model is $c=0$, so the fictitious space has dimensional $0$, so there is no effect from the fictitious dimensional and the theory
is referred as the pure gravity.
It is well-known to be related to the first Painl\'eve equation.

\subsection{The $I$-coordinates at the $k=2$ critical point}
\label{sec:k=2}

Take $t_1$ to be $t_1+1$,
$t_2$ to be $t_2-2a$.
Then the string equation in genus zero becomes
\begin{equation}
I_0= t_0 + (t_1+1) I_0 + (t_2 -2a) \frac{I_0^2}{2!}
+ t_3 \frac{I_0^3}{3!} +\cdots.
\end{equation}
Write
\be  
t_0 = x= aq^2,
\ee
then we have:
\be  \label{eqn:I0-q}
\biggl(\frac{I_0}{q}\bigg)^2= 1 +  \frac{t_1}{aq} \frac{I_0}{q} +  \frac{1}{2!}\frac{t_2}{a} \biggl(\frac{I_0}{q}\biggr)^2
+\frac{1}{3!} \frac{qt_3}{a} \biggl(\frac{I_0}{q}\biggr)^3  +\cdots.
\ee
From this one gets:
\ben
I_0 & = &q \biggl(1+ \frac{t_1}{2aq} +\frac{1}{2!} \frac{t_2}{2a} +\frac{1}{2} \biggl(\frac{t_1}{2aq}\biggr)^2 
+ \frac{1}{3!} \frac{qt_3}{2a} + \frac{t_1}{2aq} \frac{t_2}{2a} \\
& + & \frac{1}{4!}\frac{q^2t_4}{2a} + \frac{1}{2} \frac{t_1}{2aq} \frac{qt_3}{2a} + \frac{3}{8} \biggl(\frac{t_2}{2a}\biggr)^2 
+ \frac{3}{4} \biggl(\frac{t_1}{2aq}\biggr)^2\frac{t_2}{2a} 
- \frac{1}{8}\biggl( \frac{t_1}{2a}\biggr)^4 \\
& + & \frac{1}{5!}\frac{q^3t_5}{2a} + \frac{1}{6} \frac{t_1}{2aq} \frac{q^2t_4}{2a} + \frac{1}{3} \frac{t_2}{2a} \frac{qt_3}{2a} 
+ \frac{2}{3} \biggl(\frac{t_1}{2aq}\biggr)^2\frac{qt_3}{2a} + \frac{t_1}{2aq} \biggl(\frac{t_2}{2a}\biggr)^2+\cdots\biggr).
\een
This is obtained as follows.
Write
\be 
\frac{I_0}{q} = 1 + \sum_{n \geq 1} C_n(\hat{t}_1, \dots, \hat{t}_n),
\ee
where $\hat{t}_n = \frac{q^{n-2}t_n}{2a}$, and $C_n$ is a weighted 
homogeneous polynomial in $\hat{t}_1, \dots, \hat{t}_n$ where we assign $\deg \hat{t}_n = n$.
Plugging this into \eqref{eqn:I0-q},
one gets a sequence of recursion relations:
\ben
2 C_1 & = & 2 \hat{t}_1, \\
2C_2 + C_1^2 & = & 2\hat{t_1} C_1 + \hat{t}_2, \\
2C_3 + C_1C_2 & = & 2\hat{t_1} C_2 + \hat{t}_2 \cdot 2 C_1 + \frac{1}{3} \hat{t}_3, 
\een
etc.
These recursion relations uniquely all the $C_n$'s.

\begin{rmk}
Gross and Migdal \cite{Gross-Migdal} used Lagrange method to prove the following inversion formula:
If
\be 
x=y^k - \sum_i C_i y^i,
\ee
then
\be  
y(x) =x^{1/k} + \frac{1}{k}  \sum_{p=1}^\infty \frac{1}{p!} \pd_x^{p-1} \biggl[ \biggl( \sum_i C_i x^{i/k}\biggr)^p x^{1/k-1}\biggr].
\ee
Now we have
\begin{equation}
\frac{t_0}{a} = I_0^2 -  \sum_{i\geq 1} \frac{t_i}{i!a} I_0^i ,
\end{equation}
and so one gets a closed formula for $I_0$:
\ben 
I_0 & = & q + \frac{1}{2} \sum_{i=1}^\infty \frac{t_i}{i!a}   q^{i-1} 
+ \frac{1}{2} \cdot \frac{1}{2!} \cdot  \frac{1}{2q}\frac{\pd}{\pd q} \biggl[ \biggl(\sum_{i=1}^\infty \frac{t_i}{i!a} q^i \bigg)^2 q^{-1}\biggr] + \cdots.
\een
\end{rmk}

One can also assign the following degrees:
\be 
\Deg t_n  = 2- n.
\ee 
Then one has
\begin{align} 
\Deg q & = 1, & \Deg \hat{t}_n&  = 0,
\end{align}
and so $I_0$ is weighted homogeneous of degree
\be 
\Deg I_0 = 1.
\ee 
Plugging the explicit results for $I_0$ into \eqref{Def:In},
one computes $I_k$ for $k \geq 1$. 
For example,
\ben
I_1 & = & (t_1+1) + (t_2-2a) I_0 +t_3 \frac{I_0^2}{2!}
+t_4 \frac{I_0^3}{3!} + t_5\frac{I_0^4}{4!} +\cdots \\
& = & 1 -2aq \bigg(1- \frac{1}{2} \frac{t_2}{2a}+\frac{1}{2} \biggl(\frac{t_1}{2a}\biggr)^2
-\frac{1}{3} \frac{qt_3}{2a}-\frac{1}{8} \frac{q^2t_4}{2a}
-\frac{1}{2}\frac{t_1}{2aq}\frac{qt_3}{2a} \\
& - & \frac{1}{8}\biggl(\frac{t_2}{2a}\biggr)^2 
+ \frac{1}{4} \biggl(\frac{t_1}{2aq} \biggr)^2\frac{t_2}{2a}-\frac{1}{8} \biggl(\frac{t_1}{2aq}\biggr)^4 \\
& - & \frac{1}{30}\frac{q^3t_5}{2a} -\frac{1}{3}\frac{t_1}{2aq}\frac{q^2t_4}{2a}
-\frac{1}{3}\frac{t_2}{2a} \frac{qt_3}{2a}-\frac{1}{3}\biggl(\frac{t_1}{2aq}\biggr)^2\frac{qt_3}{2a} \biggr) 
+\cdots,
\een
and so $1-I_1$ is weighted homogeneous of degree
\be 
\Deg(1-I_1) = 1.
\ee
and  $\log (1-I) + \log (2aq)$ is weighted homogeneous of degree $0$.
We also have 
\ben
I_2 & = & (t_2-2a)  +t_3 I_0
+t_4 \frac{I_0^2}{2!} + t_5\frac{I_0^3}{3!} +\cdots \\
& = & -2a \biggl(1- \frac{t_2}{2a}- \frac{qt_3}{2a} - \frac{1}{2} \frac{q^2t_4}{2a}
- \frac{t_1}{2aq}\frac{qt_3}{2a} \\
& - & \frac{1}{6}\frac{q^3t_5}{2a} - \frac{t_1}{2a}\frac{q^2t_4}{2a}
-\frac{1}{2} \frac{t_2}{2a} \frac{qt_3}{2a} - \frac{1}{2} \biggl(\frac{t_1}{2aq}\biggr)^2\frac{qt_3}{2a}
 + \cdots \biggr) ,
\een

\ben
I_3 & = &  t_3  
+t_4  I_0  + t_5\frac{I_0^2}{2!} +\cdots 
=  t_3 + t_4q + \frac{aq^2t_5 + t_1t_4}{2a} +\cdots \\
& = & \frac{2a}{q} \biggl( \frac{qt_3}{2a} + \frac{q^2t_4}{2a} 
+ \frac{1}{2} \frac{q^3t_5}{2a} + \frac{t_1}{2aq} \frac{q^2t_4}{2a}+\cdots \biggr), 
\een

\ben
I_4 & = &   t_4   + t_5 I_0 +\cdots 
=  t_4 + t_5q   +\cdots = \frac{2a}{q^2} \biggl(\frac{q^2t_4}{2a} + \frac{q^3t_5}{2a}+\cdots\biggr), 
\een
etc.
Note for $n \geq 2$, $I_n $ is weighted homogeneous of degree
\be 
\Deg I_n = 2-n.
\ee

Plugging these results into  \eqref{eqn:F0}  ,
one gets:
\ben
F_0& = & \frac{I_0^3}{6}(1-I_1)^2+\sum_{p> 1} \frac{(-1)^p}{(p+2) \cdot p!} I_0^{p+2}I_p(1-I_1) 
+\frac{1}{2}\sum_{p,q>1}\frac{(-1)^{p+q}I_0^{p+q+1}}{(p+q+1)q!p!}I_pI_{q} \\
&= & \frac{4a^2q^5}{15}   
+ \frac{aq^4}{4} t_1 
+ (\frac{aq^5}{15} t_2 + \frac{q^3}{6}t_1^2) 
+ (\frac{aq^6}{72}t_3 + \frac{q^4}{8}t_1t_2 + \frac{q^2}{12a}t_1^3) \\
&  + &  (\frac{aq^7}{420}t_4 + \frac{q^5}{30}t_1t_3 
+ \frac{q^5}{4}t_2^2 + \frac{q^3}{8a} t_1^2t_2 
+ \frac{q}{32a^2} t_1^4) \\
& + & (\frac{aq^8}{2880}t_5 + \frac{q^6}{144}t_1t_4 
+ \frac{q^6}{72} t_2t_3 + \frac{q^4}{24a} t_1^2t_3 
+ \frac{q^4}{16a}t_1t_2^2 + \frac{q^2}{12a^2} t_1^3t_2
+ \frac{1}{120a^3} t_1^5) + \cdots 
\een
is weighted homogeneous of degree 
\be 
\Deg F_0 = 5.
\ee
Let us now make the following observation:
$F_0$ can be split into three parts:
\be 
F_0 = F_0^{(0)} +t_0  F^{(1)}_1 + \hat{F}_0,
\ee
where $F_0^{(0)}$ and $F^{(1)}_0$ do not depend on $t_0$,
and $\hat{F}_0$ contains terms which has a factor $t_0^\alpha$ with $\alpha\neq 0, 1$.
A term in $F_0^{(0)}$ is a constant times $t_{k_1} \cdots t_{k_n}$ with $k_1, \dots, k_n\geq 1$.
It degree is 
\be 
(2-k_1) + \cdots + (2-k_n) = 5,
\ee
and so 
\be 
(k_1-1) + \cdots + (k_n-1) = n -5. 
\ee
This has a solution only if $n \geq 5$. 
The first couple of terms of $F^{(0)}_0$ are
\be
F^{(0)}_0 = \frac{1}{120a^3} t_1^5 + \frac{1}{80a^4} t_1^4t_2 + \cdots.
\ee
A term in $F_0^{(1)}$ is a constant times $t_0t_{k_1} \cdots t_{k_n}$ with $k_1, \dots, k_n\geq 1$.
It degree is 
\be 
2+ (2-k_1) + \cdots + (2-k_n) = 5,
\ee
and so 
\be 
(k_1-1) + \cdots + (k_n-1) = n -3. 
\ee
This has a solution only if $n \geq 3$. 
The first few terms of $F^{(1)}_0$ are
\be
F^{(1)}_0 =  \frac{q^2}{12a} t_1^3 + \frac{q^2}{12a^2} t_1^3t_2 + \cdots.
\ee

Similarly, 
using \eqref{eqn:F1} one gets:
\ben
F_1 &=& -\frac{1}{24} \log(2aq)
+(\frac{1}{96a}t_2 - \frac{1}{192a^2q^2}t_1^2) 
+ \frac{q}{144a}t_3 \\
& + & (\frac{q^2}{384a}t_4 + \frac{1}{192a^2}t_1t_3
 + \frac{1}{384a^2}t_2^2 - \frac{1}{384a^3q^2}t_1^2t_2 
 + \frac{1}{1536a^4q^4} t_1^4) \\
&  + & (\frac{q^3}{1440a}t_5 
+ \frac{q}{288a^2} t_1t_4 
+ \frac{q}{192a^2} t_2t_3 + \frac{1}{1152a^3q}t_1^2t_3) +\cdots,
\een 
As in the case of $F_0$, 
we define $F_1^{(0)}$ and $F_1^{(1)}$,
and set $\hat{F}_1 = F_1 - F^{(0)}_1 - F^{(1)}_1$.
A term in $F_1^{(0)}$ is a constant times $t_{k_1} \cdots t_{k_n}$ with $k_1, \dots, k_n\geq 1$.
It degree is 
\be 
(2-k_1) + \cdots + (2-k_n) = 0,
\ee
and so 
\be 
(k_1-1) + \cdots + (k_n-1) = n. 
\ee
This has a solution only if $n \geq 1$. 
The first few  terms of $F^{(0)}_1$ are
\be
F^{(0)}_1 = \frac{t_2}{96a}   + \frac{ t_1t_3}{192a^2}  + \frac{t_2^2}{384a^2}  + \frac{t_1^2t_4}{512a^3}
+ \frac{t_1t_2t_3}{192a^3} + \frac{t_2^3}{1152a^3}  + \cdots.
\ee
A term in $F_1^{(1)}$ is a constant times $t_0t_{k_1} \cdots t_{k_n}$ with $k_1, \dots, k_n\geq 1$.
It degree is 
\be 
2+ (2-k_1) + \cdots + (2-k_n) = 0,
\ee
and so 
\be 
(k_1-1) + \cdots + (k_n-1) = n +2. 
\ee
This has a solution only if $n \geq 1$. 
The first few terms of $F^{(1)}_1$ are
\be
F^{(1)}_1 =  \frac{q^2}{384a} t_4 + \frac{q^2}{768a^2} t_1t_5 + \frac{q^2}{384a^2}t_2t_4+\frac{11q^2}{6912}t_3^2+  \cdots.
\ee

Finally, 
using \eqref{eqn:Fg} one can compute $F_g$.
For example,
\ben
F_2 &= &-\frac{7}{5760a^2q^5} + \frac{7t_2}{23040a^3q^5} + \frac{7t_1^2}{9216a^4q^7}  
+ \frac{5t_3}{27648a^3q^4} + \frac{t_4}{92160a^3q^3} \\
& + & \frac{7t_1t_3}{46080a^4q^5} + \frac{7t_2^2}{184320a^4q^5} + \frac{7t_1^2t_2}{36864a^5q^7}
 - \frac{49t_1^4}{147456a^6q^9} \\
& - & \frac{t_5}{276480a^3q^2}  + \frac{5t_1t_4}{55296a^4q^4} + \frac{5t_2t_3}{55296a^4q^4} - \frac{5t_1^2t_3}{55296a^5q^6} + \cdots.
\een
It is easy to see that for $g\geq 2$, $F_g$ is weighted homogeneous of degree 
\be 
\Deg F_g = 5(1-g).
\ee
So if we set
\be 
\Deg \lambda = \frac{5}{2},
\ee
then $F+ \frac{1}{24} \log (2aq)$ is weighted homogeneous of degree $0$.
We have a similar decomposition
\be 
F_g = F_g^{(0)} + F_g^{(1)} + \hat{F}_g.
\ee

\subsection{Free energy and correlation functions   when $t_n =0$ for $n \geq 1$}

The fact that $F_g^{(0)}$ and $F_g^{(1)}$ are nonvanishing makes it difficult 
to compute the free energy from the specific heat $u= \lambda^2 \pd_{t_0}^2F$. 
Note for $g\geq 0$,
each term of $F_g^{(0)}$ and $F_g^{(1)}$ contains some $t_k$ for $k \geq 1$,
and so
\begin{align} 
F_g^{(0)}(t_0) &:= F_g^{(0)}|_{t_k =0, k\geq 1} = 0, & F_g^{(1)}(t_0): = F_g^{(1)}|_{t_k =0, k\geq 1} = 0.
\end{align}
The first few terms of $F(t_0):=F|_{t_n = 0, n \geq 1}$ are given by:
 \be \label{eqn:F(t0)}
\lambda^2F(t_0) =\frac{4}{15}\frac{t_0^{5/2}}{a^{1/2}}-\frac{\lambda^2}{24}\ln(2a^{1/2}t_0^{1/2})
-\frac{7a^{1/2}\lambda^4}{5760t_0^{5/2}}   + \frac{245a\lambda^6}{331776t_0^5}
+ \cdots,
\ee
in particular,
\bea
F_0(t_0) & = & \frac{4}{15}\frac{t_0^{5/2}}{a^{1/2}}, \\
F_1(t_0) &= & -\frac{1}{48}\ln(t_0), \\
F_2(t_0) & = & -\frac{7a^{1/2}}{5760t_0^{5/2}}. \\
F_3(t_0)  & = &  \frac{245a}{331776t_0^5}.
\eea 
 
 When $t_n =0$ for $n \geq 1$, 
 \ben
&&   \nabla(x_1)\nabla(x_2)\nabla(x_3)F_0 (t_0) = \frac{1}{2aq(x_1-2q)^{\frac{3}{2}}(x_2-2q)^{\frac{3}{2}}(x_3-2q)^{\frac{3}{2}} }, \\
&& \nabla(x_1)\cdots\nabla(x_4)F_0(t_0) = \frac{1}{4a^2q^3\prod_{i=1}^4(x_i-2q)^{\frac{3}{2}}} 
\biggl(-1 + \sum_{i=1}^4 \frac{3q}{x_i-2q}\biggr), \\ 
&& \nabla(x)F_1(t_0) = 
- \frac{1}{48aq^2 (x-2q)^{\frac{3}{2}}} + \frac{1}{16aq (x-2q)^{\frac{5}{2}}}, \\
&&  \nabla(x_1)\nabla(x_2)F_1 = \frac{5}{32a^2q^2(x_1-2)^{\frac{3}{2}} (x_2-2q)^{\frac{7}{2}}}  \\
 && \quad \quad+ \frac{3}{32a^2q^2(x_1-2)^{\frac{5}{2}} (x_2-2q)^{\frac{5}{2}}} + \frac{5}{32a^2q^2(x_1-2)^{\frac{3}{2}} (x_2-2q)^{\frac{7}{2}}}   \\
&&\quad \quad - \frac{1}{16a^2q^3}(x_1-2q)^{\frac{3}{2}} (x_2-2q)^{\frac{5}{2}}-  \frac{1}{16a^2q^3}(x_1-2q)^{\frac{5}{2}} (x_2-2q)^{\frac{3}{2}} \\
&&\quad \quad   + \frac{1}{48a^2q^4 (x_1-2q)^{\frac{3}{2}} (x_2-2q)^{\frac{3}{2}}} , \\
&& \nabla(x)F_2  =   \frac{7}{2304a^3q^7(x-2q)^{\frac{3}{2}}} - \frac{7}{768a^3q^6(x-2q)^{\frac{5}{2}}}
+ \frac{7}{256  a^3q^5 (x-2q)^{\frac{7}{2}} } \\
&& \quad \quad - \frac{203}{3072 a^3 q^4  (x-2q)^{\frac{9}{2}}  }
+ \frac{135}{1024q^3(x-2q)^{\frac{11}{2}}}. 
\een
These results match with the results obtained in \cite{Bergere-Eynard, Iwaki-Sanchez}
by topological recursions.
In particular,
the free energy in those computations now have an interpretation
simply as the restriction of the free energy of the Witten-Kontsevich tau-function 
restricted to $t_n = \delta_{n,1}$ for $n \geq 1$.

\section{The Yang-Lee Edge Singularity: The $k=3$ Critical Point}
\label{sec:k=3} 

We consider in this Section the $(5,2)$-model coupled with topological gravity.
The $(5,2)$-model arises at the Yang-Lee edge singularity (see \cite{Gao-etal}and the reference therein.) 

\subsection{The $k=3$ critical point}

Now take $t_1$ to be $t_1+1$,
$t_3$ to be $t_3-6a$.
Then the string equation in genus zero becomes:
\begin{equation}
I_0= t_0 + (t_1+1) I_0 + t_2 \frac{I_0^2}{2!}
+ (t_3-6a) \frac{I_0^3}{3!} +\cdots.
\end{equation}
Write
\be  
t_0 = x= aq^3,
\ee
then we have:
\be 
\biggl(\frac{I_0}{q}\bigg)^3= 1 +  \frac{t_1}{aq^2} \frac{I_0}{q} 
+ \frac{1}{2!} \frac{t_2}{aq} \biggl(\frac{I_0}{q}\biggr)^2
+ \frac{1}{3!} \frac{t_3}{a} \biggl(\frac{I_0}{q}\biggr)^3
+ \frac{1}{4!} \frac{qt_3}{a} \biggl(\frac{I_0}{q}\biggr)^4  +\cdots.
\ee
From this one gets:
\ben
I_0 & = &q \biggl(1+ \frac{t_1}{3aq^2} +\frac{1}{2!} \frac{t_2}{3aq} 
+ \frac{1}{3!} \frac{t_3}{3a} + \frac{1}{2} \frac{t_1}{3aq^2} \frac{t_2}{3aq} - \frac{1}{3}\biggl (\frac{t_1}{3aq^2}\biggr)^3 \\
& + & \frac{1}{4!}\frac{qt_4}{3a} + \frac{1}{3} \frac{t_1}{3aq^2} \frac{t_3}{3a} + \frac{1}{4} \biggl(\frac{t_2}{3aq}\biggr)^2 
- \frac{1}{2} \biggl(\frac{t_1}{3aq^2}\biggr)^2\frac{t_2}{3aq} 
+ \frac{1}{3}\biggl( \frac{t_1}{3aq^2}\biggr)^4 \\
& + & \frac{1}{5!}\frac{q^2t_5}{3a} + \frac{1}{8} \frac{t_1}{3aq^2} \frac{qt_4}{3a} + \frac{1}{4} \frac{t_2}{3aq} \frac{t_3}{3a} 
 +\cdots\biggr).
\een 
Note as in the $k=2$ case, one can write
\be 
\frac{I_0}{q} = 1 + \sum_{n \geq 1} C_n(\hat{t}_1, \dots, \hat{t}_n),
\ee
where $\hat{t}_n = \frac{q^{n-3}t_n}{3a}$, and $C_n$ is a weighted 
homogeneous polynomial in $\hat{t}_1, \dots, \hat{t}_n$ where we assign $\deg \hat{t}_n = n$.
One can also assign the following degrees:
\be 
\Deg t_n  = 3- n.
\ee 
Then one has
\begin{align} 
\Deg q & = 1, & \Deg \hat{t}_n&  = 0,
\end{align}
and so $I_0$ is weighted homogeneous of degree
\be 
\Deg I_0 = 1.
\ee 
Plugging the explicit results for $I_0$ into \eqref{Def:In},
one computes $I_k$ for $k \geq 1$. 
For example,
\ben
I_1 & = & (t_1+1) + t_2 I_0 + (t_3-6a) \frac{I_0^2}{2!}
+t_4 \frac{I_0^3}{3!} + t_5\frac{I_0^4}{4!} +\cdots \\
& = & 1-t_1 -3aq^2 \bigg(1+\biggl(\frac{t_1}{3aq^2}\biggr)^2
-\frac{1}{6} \frac{t_3}{3a}+  \frac{t_1}{3aq^2} \frac{t_2}{3aq}-\frac{2}{3} \biggl( \frac{t_2}{3aq^2} \biggr)^3  \\
& - & \frac{1}{12} \frac{qt_4}{3a}
+  \frac{1}{4}\biggl(\frac{t_2}{3aq}\biggr)^2  
-  \frac{1}{40}\frac{q^2t_5}{3a} -\frac{1}{6}\frac{t_1}{3aq^2}\frac{qt_4}{3a}
+\frac{1}{6}\biggl(\frac{t_1}{3aq^2}\biggr)^2\frac{t_3}{3a} \\
& + & \frac{1}{2} \frac{t_1}{3aq^2} \biggl(\frac{t_2}{3aq}\biggr)^2 
- \biggl(\frac{t_1}{3aq^2}\biggr)^3 \frac{t_2}{3aq}  + \frac{2}{3} \biggl(\frac{t_1}{3aq^2}\biggr)^5 \biggr) 
+\cdots,
\een
and so $1-I_1$ is weighted homogeneous of degree
\be 
\Deg(1-I_1) = 2.
\ee
and  $\log (1-I) + \log (3aq^2)$ is weighted homogeneous of degree $0$.
We also have 
\ben
I_2 & = & t_2  + (t_3-6a) I_0
+t_4 \frac{I_0^2}{2!} + t_5\frac{I_0^3}{3!} +\cdots \\
& = & -6a q\biggl(1+ \frac{t_1}{3aq^2}- \frac{1}{3}\frac{t_3}{3a} + \frac{1}{2} \frac{t_1}{3aq^2}\frac{t_2}{3aq}
 - \frac{1}{3} \biggl( \frac{t_1}{3aq^2}\biggr)^3 \\
& - &  \frac{5}{24} \frac{qt_4}{3a}
- \frac{1}{6} \frac{t_1}{3aq^2}\frac{t_3}{3a} + \frac{1}{4} \biggl( \frac{t_2}{3aq}\biggr)^2 - \frac{1}{2} \biggl(\frac{t_1}{3aq^2}\biggr)^2\frac{t_2}{3aq}
+ \frac{1}{3} \biggl( \frac{t_1}{3aq^2}\biggr)^4  \\
& - & \frac{3}{40}\frac{q^2t_5}{3a} - \frac{3}{8} \frac{t_1}{3aq^2}\frac{qt_4}{3a}
 + \cdots \biggr) ,
\een

\ben
I_3 & = &  (t_3-6a)  
+t_4  I_0  + t_5\frac{I_0^2}{2!} +\cdots 
=-6a +  t_3 + t_4q + \frac{3aq^3t_5 + 2t_1t_4}{6aq} +\cdots \\
& = & -6a \biggl( 1-\frac{1}{2} \frac{t_3}{3a} -\frac{1}{2} \frac{qt_4}{3a} 
- \frac{1}{4} \frac{q^2t_5}{3a} -\frac{1}{2} \frac{t_1}{3aq^2} \frac{qt_4}{3a}+\cdots \biggr), 
\een

\ben
I_4 & = &   t_4   + t_5 I_0 +\cdots 
=  t_4 + t_5q   +\cdots = \frac{3a}{q} \biggl(\frac{qt_4}{3a} + \frac{q^2t_5}{3a}+\cdots\biggr), 
\een
etc.
Note for $n \geq 2$, $I_n $ is weighted homogeneous of degree
\be 
\Deg I_n =3-n.
\ee

Plugging these results into  \eqref{eqn:F0}  ,
one gets:
\ben
F_0 
&= & \frac{9a^2q^7}{28}   
+ \frac{3aq^5}{10} t_1 
+ (\frac{aq^6}{12} t_2 + \frac{q^3}{6}t_1^2) 
+ (\frac{aq^7}{56}t_3 + \frac{q^4}{8}t_1t_2 + \frac{q}{18a}t_1^3) \\
&  + &  (\frac{aq^8}{320}t_4 + \frac{q^5}{30}t_1t_3 
+ \frac{q^5}{40}t_2^2 + \frac{q^2}{12a} t_1^2t_2 
+ \frac{1}{108a^2q} t_1^4) \\
& + & (\frac{aq^9}{2160}t_5 + \frac{q^6}{144}t_1t_4 
+ \frac{q^6}{72} t_2t_3 + \frac{q^3}{36a} t_1^2t_3 
+ \frac{q^3}{24a}t_1t_2^2 + \frac{1}{36a^2} t_1^3t_2 ) + \cdots 
\een
is weighted homogeneous of degree 
\be 
\Deg F_0 = 7.
\ee
As in the $k=2$ case,
$F_0$ can be split into three parts:
\be 
F_0 = F_0^{(0)} +t_0  F^{(1)}_1 + \hat{F}_0,
\ee
where $F_0^{(0)}$ and $F^{(1)}_0$ do not depend on $t_0$,
and $\hat{F}_0$ contains terms which has a factor $t_0^\alpha$ with $\alpha\neq 0, 1$.
A term in $F_0^{(0)}$ is a constant times $t_{k_1} \cdots t_{k_n}$ with $k_1, \dots, k_n\geq 1$.
It degree is 
\be 
(3-k_1) + \cdots + (3-k_n) = 7,
\ee
and so 
\be 
(k_1-1) + \cdots + (k_n-1) = 2n -7. 
\ee
This has a solution only if $n \geq 4$. 
The first  term of $F^{(0)}_0$  is 
\be
F^{(0)}_0 = \frac{1}{36a^2} t_1^3t_2  + \cdots.
\ee
A term in $F_0^{(1)}$ is a constant times $t_0t_{k_1} \cdots t_{k_n}$ with $k_1, \dots, k_n\geq 1$.
It degree is 
\be 
3+ (3-k_1) + \cdots + (3-k_n) = 7,
\ee
and so 
\be 
(k_1-1) + \cdots + (k_n-1) = 2n -4. 
\ee
This has a solution only if $n \geq 2$. 
The first few terms of $F^{(1)}_0$ are
\be
F^{(1)}_0 =  \frac{q^3}{6}t_1^2 + \frac{q^3}{36a} t_1^2t_3 
+ \frac{q^3}{24a}t_1t_2^2 + \cdots.
\ee

Similarly, 
using \eqref{eqn:F1} one gets:
\ben
F_1 &=& -\frac{1}{24} \log(3aq^2)
-\frac{1}{72aq^2}t_1 - \frac{1}{432a^2q^4}t_1^2  \\
& + &  \frac{1}{432a}t_3 -\frac{1}{216} \frac{t_1t_2}{a^2q^3} + \frac{1}{486} \frac{t_1^3}{s^^3q^6}\\
& + & (\frac{q}{864a}t_4 - \frac{t_1t_3}{1296a^2} 
 - \frac{t_2^2}{864a^2q^2}  + \frac{t_1^2t_2}{648a^3q^5}  
 - \frac{11t_1^4}{23328a^4q^8}  ) \\
&  + & (\frac{q^2t_5}{2880a}  
+ \frac{t_1t_4 }{2592a^2q}  
 -  \frac{t_1^2t_3}{3888a^3q^4}  -\frac{t_1t_2^2}{2592a^3q^4}
 +  \frac{t_1^3t_2}{1944a^4q^7} \\
 & - & \frac{13t_1^5}{87480a^5q^{10}})+\cdots,
\een 
As in the case of $F_0$, 
we define $F_1^{(0)}$ and $F_1^{(1)}$,
and set $\hat{F}_1 = F_1 - F^{(0)}_1 - F^{(1)}_1$.
A term in $F_0^{(0)}$ is a constant times $t_{k_1} \cdots t_{k_n}$ with $k_1, \dots, k_n\geq 1$.
It degree is 
\be 
(3-k_1) + \cdots + (3-k_n) = 0,
\ee
and so 
\be 
(k_1-1) + \cdots + (k_n-1) = 2n. 
\ee
This has a solution only if $n \geq 1$. 
The firs term  of $F^{(0)}_1$ is
\be
F^{(0)}_1 = \frac{t_3}{432a}  + \cdots.
\ee
A term in $F_1^{(1)}$ is a constant times $t_0t_{k_1} \cdots t_{k_n}$ with $k_1, \dots, k_n\geq 1$.
It degree is 
\be 
3+ (3-k_1) + \cdots + (3-k_n) = 0,
\ee
and so 
\be 
(k_1-1) + \cdots + (k_n-1) = 2n +3. 
\ee
This has a solution only if $n \geq 1$. 
The first  term  of $F^{(1)}_1$ is
\be
F^{(1)}_1 = -  \frac{q^3t_6}{25920a}+  \cdots.
\ee
Using \eqref{eqn:Fg} one can compute $F_g$.
For example,
\ben
F_2 &= &-\frac{1}{480a^2q^7} + \frac{5t_1}{776a^3q^9} + \frac{17t_1^2}{11664a^4q^{11}}  \\
& + &  \frac{t_3}{8640a^3q^7} + \frac{7t_1t_2}{8640a^4q^{10}} - \frac{101t_1^3}{69984a^5q^{13}} \\
& + & \frac{49t_4}{933120a^3q^6} +  \frac{5t_2^2}{93312a^4q^9} - \frac{5t_1^2t_2}{15552a^5q^{12}}
 + \frac{5t_1^4}{26244a^6q^{15}}  + \cdots.
\een
It is easy to see that for $g\geq 2$, $F_g$ is weighted homogeneous of degree 
\be 
\Deg F_g = 7(1-g).
\ee
So if we set
\be 
\Deg \lambda = \frac{7}{2},
\ee
then $F+ \frac{1}{24} \log (3aq^2)$ is weighted homogeneous of degree $0$.
We have a similar decomposition
\be 
F_g = F_g^{(0)} + F_g^{(1)} + \hat{F}_g.
\ee

\subsection{Free energy  and correlation functions when $t_n =0$ for $n \geq 1$}

The fact that $F_g^{(0)}$ and $F_g^{(1)}$ are nonvanishing makes it difficult 
to compute the free energy from the specific heat $u= \lambda^2 \pd_{t_0}^2F$. 
Note for $g\geq 0$,
each term of $F_g^{(0)}$ and $F_g^{(1)}$ contains some $t_k$ for $k \geq 1$,
and so
\begin{align} 
F_g^{(0)}(t_0) &:= F_g^{(0)}|_{t_k =0, k\geq 1} = 0, & F_g^{(1)}(t_0): = F_g^{(1)}|_{t_k =0, k\geq 1} = 0.
\end{align}
The first few terms of $F(t_0):=F|_{t_n = 0, n \geq 1}$ are given by:
\bea
F_0(t_0) & = & \frac{9}{28} a^2a^2 q^7 = \frac{9}{28}\frac{t_0^{4/3}}{a^{1/3}}, \\
F_1(t_0) &= & - \frac{1}{24} \ln (3aq^2) = -\frac{1}{48}\ln(t_0) , \\
F_2(t_0) & = & - \frac{1}{480a^2q^7} = -\frac{a^{1/3}}{480t_0^{7/3}}   , \\
F_3(t_0)  & = &  \frac{247}{217728a^4q^{14}} =  \frac{247a^{2/3}}{217728t_0^{14/3}} .
\eea
 When $t_n =0$ for $n \geq 1$, 
\ben
&&I_ 0 = q, I_1 = 1 - 3aq^2, I_2 = -6a q, I_3 = -6a
\een
and so
 \ben
&&   \nabla(x_1)\nabla(x_2)\nabla(x_3)F_0 (t_0) = \frac{1}{3aq^2(x_1-2q)^{\frac{3}{2}}(x_2-2q)^{\frac{3}{2}}(x_3-2q)^{\frac{3}{2}} }, \\
&& \nabla(x_1)\cdots\nabla(x_4)F_0(t_0) = \frac{1}{9a^2q^5\prod_{i=1}^4(x_i-2q)^{\frac{3}{2}}} 
\biggl(-2 + \sum_{i=1}^4 \frac{3q}{x_i-2q}\biggr), \\ 
&& \nabla(x)F_1(t_0) = 
- \frac{1}{36 aq^3 (x-2q)^{\frac{3}{2}}} + \frac{1}{24aq^2 (x-2q)^{\frac{5}{2}}}, \\
&&  \nabla(x_1)\nabla(x_2)F_1 = \frac{5}{72a^2q^4(x_1-2)^{\frac{3}{2}} (x_2-2q)^{\frac{7}{2}}}  \\
 && \quad \quad+ \frac{1}{24a^2q^4(x_1-2)^{\frac{5}{2}} (x_2-2q)^{\frac{5}{2}}} + \frac{5}{72a^2q^4(x_1-2)^{\frac{3}{2}} (x_2-2q)^{\frac{7}{2}}}   \\
&&\quad \quad - \frac{1}{18a^2q^5}(x_1-2q)^{\frac{3}{2}} (x_2-2q)^{\frac{5}{2}}-  \frac{1}{18a^2q^5}(x_1-2q)^{\frac{5}{2}} (x_2-2q)^{\frac{3}{2}} \\
&&\quad \quad   + \frac{1}{36a^2q^6 (x_1-2q)^{\frac{3}{2}} (x_2-2q)^{\frac{3}{2}}} , \\
&& \nabla(x)F_2  =   \frac{7}{1440a^3q^{10}(x-2q)^{\frac{3}{2}}} - \frac{41}{3240a^3q^9(x-2q)^{\frac{5}{2}}}
+ \frac{139}{5184  a^3q^8 (x-2q)^{\frac{7}{2}} } \\
&& \quad \quad - \frac{203}{5184  a^3 q^7  (x-2q)^{\frac{9}{2}}  }
+ \frac{35}{1152 q^6(x-2q)^{\frac{11}{2}}}. 
\een

\section{Critical Exponents}
\label{sec:Exponents}

\subsection{Critical exponents of the free energy at critical points}

At the $k$-th critical point,
\begin{align}
t_0 &= aq^k, & t_1 &= 1, & t_k & = k! \cdot a, & t_n & = 0 \;\;\; n \neq 0,1,k.
\end{align}
At this point the renormalized coupling constants are given by:
\begin{align}
I_0 &= q, & I_1 &= 1- k aq^{k-1}, & I_2 & = - k(k-1)aq^{k-2}, &\dots, \\
I_k & = -k! \cdot a, & I_n & = 0 \;\;\; \text{otherwise},
\end{align}
and so
\be 
\tilde{I}_{j} = \frac{-aq^{k-j}}{(kaq^{k-1})^{(2j+1)/3}} \sim \frac{1}{q^{(2k+1)(j-1)/3}}.
\ee
It follows that
\ben 
\corr{\tau_2^{l_2}\cdots \tau_k^{l_k}}_g \tilde{I}_2^{l_2}\cdots \tilde{I}_k^{l_k} \sim q^{ (1-g)(2k+1)} 
\sim t_0^{(1-g) (1/k+2)}.
\een
Here we use the selection rule
\be 
\sum_{j=2}^k l_j(j-1) = 3g-3
\ee
to get:
\ben
&& \sum_{j=2}^k l_j \cdot (2k+1)(j-1)/3
= (2k+1)(g-1).
\een
So we have proved:
At the $k$-th critical point,
\be 
F_g(t_0)\sim t_0^{(1-g) (1/k+2)}.
\ee

\subsection{Critical exponents of the correlation functions at critical points}

A typical term of the $n$-point function $\nabla(x_1) \cdots \nabla(x_n)F_g$ is
\ben
&& (1-I_1)^{\frac{2n}{3}}\hat{x}_1^{2k_1+3} \cdots \hat{x}_n^{2k_n+3} \hat{I}_2^{l_2} \cdots \hat{I}_k^{l_k} \\
& = &  (1-I_1)^{\frac{2n}{3}} \prod_{j=1}^n (x_j-2I_0)^{-\frac{2k_j+3}{2}}(1-I_1)^{-\frac{2k_j+3}{3}} \cdot  \hat{I}_2^{l_2} \cdots \hat{I}_k^{l_k},
\een
such that
\be 
(2k_1+3)  + \cdots + (2k_n+3) +2 l_2(2-1) + \cdots +2 l_k(k-1) = 6g-6+5n.
\ee
Such a term can be expanded as follows:
\ben
&&  (1-I_1)^{\frac{2n}{3}} \prod_{j=1}^n (1-I_1)^{-\frac{2k_j+3}{3}} \sum_{n_j \geq 0} x_j^{-\frac{2k_j+3}{2}}  \binom{-k_j-1/2}{n_j} \frac{(2I_0)^{n_j}}{x_j^{n_j}}
\cdot  \prod_{j=2}^k \hat{I}_j^{l_j}, 
\een
its coefficient of $\prod_{j=1}^n x_j^{-k_j-n_j-\frac{3}{2}}$ has the following behavior:
\ben
&&  (1-I_1)^{\frac{2n}{3}} \prod_{j=1}^n (1-I_1)^{-\frac{2k_j+3}{3}}   I_0^{n_j} 
\cdot  \prod_{j=2}^k \hat{I}_j^{l_j} \\
& \sim & q^{\frac{2n}{3}(k-1)} \prod_{j=1}^n (q^{-(k-1) \cdot \frac{2k_j+3}{3}}\cdot q^{n_j}) \cdot \prod_{j=2}^k q^{-l_j \cdot (2k+1)(j-1)/3}.
\een
The total exponent is 
\ben
&& \frac{2n}{3}(k-1) + \sum_{j=1}^n  (-(k-1) \cdot \frac{2k_j+3}{3} + n_j )- \sum_{j=2}^k l_j \cdot (2k+1)(j-1)/3 \\
& = &  \frac{2n}{3}(k-1) + \sum_{j=1}^n  (-(k-1) \cdot \frac{2k_j+3}{3} + n_j ) \\
& +&  \frac{2k+1}{6} \sum_{j=1}^n (2k_j+3)- \frac{2k+1}{6} (6g-6+5n)  \\
& = & \sum_{j=1}^n (k_j+n_j -k)  + (1-g) (2k+1).
\een
So we have shown that
\be
\begin{split}
\frac{\pd^n F_g}{\pd t_{m_1}\cdots t_{m_n}}(t_0) 
\sim q^{\sum_{j=1}^n (m_j -k)  + (1-g) (2k+1)} \\
\sim t_0^{\sum_{j=1}^n (m_j/k -1)  + (1-g) (2+1/k)}.
\end{split}
\ee

\section{Conclusions and Prospects}
 
In  this work we have studied the phase transition and critical phenomena in two-dimensional topological gravity
within the Landau-Ginzburg-Wilson paradigm.
The Landau-Ginzburg equation is derived from the puncture equation and the KdV hierarchy
satisfied by the Witten-Kontsevich partition function \cite{Dijkgraaf-Witten}.
In genus zero the Landau-Ginzburg potential 
is of the form 
\be 
- \frac{1}{2} v^2 +\sum_{n=0}^\infty t_n \frac{v^{n+1}}{(n+1)!}. 
\ee
As shown in \cite{Zhou-1D},
after a simple renormalization procedure one gets: 
\be 
- \frac{1}{2} v^2 +\sum t_n \frac{v^{n+1}}{(n+1)!}
=I_{-1} + \sum_{=1}^\infty (I_n-\delta_{n,1}) \frac{(v-I_0)^n}{(n+1)!}, 
\ee
where $I_0$ is the solution of
\be 
I_0 = \sum_{n=0}^\infty t_n \frac{I_0^n}{n!}, 
\ee
and for $k \geq 1$,
\be \label{def:Ik}
I_k= \sum_{n \geq 0} t_{n+k} \frac{I_0^n}{n!},
\ee
and 
\be
I_{-1} = \sum_{n =0}^\infty t_n \frac{I_0^{n+1}}{(n+1)!}.
\ee
The renormalized coupling constants are used to get unified computations to compute 
the free energy functions and the correlations functions at all the critical points,
and the critical components can also be obtained in a simple unified way.

Usually one works locally at a  single critical point.
Surprisingly we can work uniformly at all critical points in this case. 
We believe it is a common feature for any
theory obtained from two-dimensional topological gravity coupled with some topological matter.
Indeed,
as shown in \cite{Dijkgraaf-Witten},
one can derive the Landau-Ginzburg equations from the topological recursion relations in genus zero for such a theory.
Starting  with the Landau-Ginzburg potential for these equations,
one can perform renormalization as in the case of  Witten-Kontsevich tau-function.
The puncture equation and the dilaton equation hold for such a theory.
If the puncture operator and the dilaton operator can be reformulated as simple operators in the renormalized coupling constants,
then one can again expect the  free energy in genus $>0$ as finite expressions. 
The most natural extension is to consider the intersection numbers on
Witten's moduli spaces of r-spin curves
and treat all the $(p,q)$-models in the same fashion.
In a forthcoming paper we will work out the details for case of $r=3$. 
Further extension to the general case and topological gravity coupled to topological sigma models 
will be investigated  in the future.

\vspace{.2in}
{\bf Acknowledgements}. This research is partly supported by NSFC (No.~12371254, No.~11890662, No.~12061131014).
The author thanks Professor Yongbin Ruan and Qile Chen for  helpful discussions on $(p,q)$-minimal models many years ago.
These discussions have kept the author curious about such models.

\bibliographystyle{plain}

\end{document}